\newtheorem{theorem}{Theorem}[section]
\newtheorem{corollary}{Corollary}[theorem]
\newtheorem{lemma}[theorem]{Lemma}
\begin{document}

\def\spacingset#1{\renewcommand{\baselinestretch}%
{#1}\small\normalsize} \spacingset{1}

\newcommand{\bbeta}{\boldsymbol{\beta}}
\newcommand{\btau}{\boldsymbol{\tau}}
\newcommand{\bv}{\boldsymbol{v}}
\newcommand{\bX}{\boldsymbol{X}}
\newcommand{\bx}{\boldsymbol{x}}
\newcommand{\bY}{\boldsymbol{Y}}
\newcommand{\by}{\boldsymbol{y}}
\newcommand{\bD}{\boldsymbol{D}}
\newcommand{\bU}{\boldsymbol{U}}
\newcommand{\bW}{\boldsymbol{W}}
\newcommand{\bZ}{\boldsymbol{Z}}
\newcommand{\bz}{\boldsymbol{z}}
\newcommand{\bA}{\boldsymbol{A}}
\newcommand{\bepsilon}{\boldsymbol{\epsilon}}
\newcommand{\Lhalf}{L_{\frac{1}{2}}}
\newcommand{\tbeta}{\tilde{\beta}}
\newcommand{\tbbeta}{\tilde{\boldsymbol{\beta}}}
\newcommand{\bSigma}{\boldsymbol{\Sigma}}
\newcommand{\btheta}{\boldsymbol{\theta}}
\newcommand{\bLambda}{\boldsymbol{\Lambda}}
\newcommand{\bb}{\boldsymbol{b}}

\title{Bayesian $L_{\frac{1}{2^{}}}$ regression }
\author[1, 2]{Xiongwen Ke\thanks{
   Corresponding author: kexiongwen@csu.edu.cn}}
\author[2,3]{Yanan Fan}
\affil[1]{School of Mathematics and Statistics\\
     Central South University, China\\}
    \affil[2]{School of Mathematics and Statistics\\
    UNSW, Sydney 2052, Australia \\}
\affil[3]{Data61, CSIRO Sydney 2015, Australia}

\date{}
\maketitle

\begin{abstract}
It is well known that Bridge regression \cite{knight2000asymptotics} enjoys superior theoretical properties when compared to traditional LASSO. However, the current latent variable representation of its Bayesian counterpart, based on the exponential power prior, is computationally expensive in higher dimensions. In this paper, we show that the exponential power prior has a closed form scale mixture of normal decomposition for $\alpha=(\frac{1}{2})^\gamma, \gamma \in  \{1, 2,\ldots\}$. We call these types of priors $L_{\frac{1}{2}}$ prior for short. We develop an efficient partially collapsed Gibbs sampling scheme for computation using the $L_{\frac{1}{2}}$ prior and study theoretical properties when $p>n$.  In addition, we introduce a non-separable Bridge penalty function inspired by the fully Bayesian formulation and a novel, efficient coordinate descent algorithm.
We prove the algorithm's convergence and show that the local minimizer from our optimisation algorithm has an oracle property. Finally, simulation studies were carried out to illustrate the performance of the new algorithms. Supplementary materials for this article are available online.
\end{abstract}

\noindent%
{\it Keywords:}  Bridge shrinkage; High dimensional regression; MCMC; Sparse optimisation.

\spacingset{1.45}
\section{Introduction}
Consider the linear regression problem 
\begin{equation}\label{eq:lm}
\bY=\bX \bbeta+\sigma\bepsilon,
\end{equation}
where $\bY=(y_1,\ldots, y_n)$ is 
an $n$-dimensional response vector, assumed to have been centred to 0 to avoid the need for an intercept. $\bX$ is an $n\times p$ design matrix consisting of $p$  {centered and standardised covariate measurements with $||X_j||^2=n, 1\leq j \leq p$}. $\bbeta=(\beta_1,\ldots, \beta_p)^T$ is a $p\times 1$ vector of unknown coefficients, $\bepsilon=(\epsilon_1,\ldots,\epsilon_n)$ is a vector of i.i.d noise with mean $0$ and variance 1. $\sigma>0$ is the error standard deviation. The Bridge estimator \citep{frank1993statistical} is the solution to the objective function of the form
$$
\arg \min_{\bbeta} \frac{1}{2\sigma^{2}}\|\bY-\bX \bbeta\|^{2}+\lambda \sum_{j=1}^{p}\left|\beta_{j}\right|^{\alpha}
$$
with $0<\alpha<1, \lambda >0$. 
When $\alpha=1$, the Bridge estimator is the same as the LASSO estimator, while the case of $\alpha=0$ is the NP-hard problem of subset selection, so the Bridge estimator is a compromise between the two.

Several papers have studied the statistical properties of Bridge regression estimators and their optimisation strategies from a frequentist point of view  \citep{knight2000asymptotics,huang2008asymptotic,zou2008one}. 
From a Bayesian perspective, the Bridge penalty induces the exponential power prior distribution for $\bbeta$ of the form, see \cite{polson2014bayesian}.
$$\pi(\bbeta|\lambda,\alpha) =\prod_{j=1}^p\frac{\lambda^{1 / \alpha}}{2 \Gamma(1+1 / \alpha)} \exp \left(-\lambda\left|\beta_{j}\right|^{\alpha}\right),\quad
 0<\alpha<1.$$ 
This prior can be expressed as a scale mixture of normals \citep{polson2014bayesian,armagan2009variational,west1987scale} 
$$\pi\left(\bbeta | \lambda, \alpha\right)\propto \prod_{j=1}^p\int_{0}^{\infty} \mathcal{N}\left(0, \tau_{j}^{2} \lambda^{-2 / \alpha}\right) \pi\left(\tau_{j}^{2}\right) d \tau_{j}^{2},$$ where $\pi(\tau_{j}^{2})$  is the density for scale parameter $\tau_{j}^{2}$. However, there is no closed form expression for $\pi(\tau_{j}^{2})$,
\cite{polson2014bayesian} proposed to work with the conditional distribution of $\btau^2|\bbeta$ in their MCMC strategy:
$$
\pi(\tau^{2} | \bbeta) = \prod_{j=1}^{p}\frac{\exp \left(-\lambda^{2 / \alpha}\beta_{j}^{2} /\tau_{j}^{2}\right) \pi\left(\tau_{j}^{2}\right)}{E_{\pi\left(\tau_{j}^{2}\right)}\left\{\exp \left(-\lambda^{2 / \alpha}\beta_{j}^{2} /\tau_{j}^{2}\right)\right\}},
$$
which is an exponentially tilted stable distribution. They suggested the use of the double rejection sampling algorithm of  \cite{devroye2009random} to sample this conditional distribution. However, the approach is complicated and not easy to scale to high dimensions. 
A second approach proposed by \cite{polson2014bayesian},
is based on a scale mixture of triangular (SMT) representation. 
More recently, an alternative
uniform-gamma representation was introduced by \cite{mallick2018bayesian}.

When the design matrix $\bX$ is orthogonal, both the scale mixture of triangles and the scale mixture of uniforms work well, but both of them suffer from poor mixing when the design matrix is strongly collinear. This is because both these two sampling schemes need to be generated from truncated multivariate normal distributions, which are still difficult to efficiently sample in higher dimensions.

In this paper, we propose a new data augmentation strategy, which provides us with a Laplace mixture representation of the exponential power prior at $\alpha=\left(\frac{1}{2}\right)^{^{\gamma}}, \gamma=\{1,2,\ldots\}$. For notational simplicity, we refer to these simply as $\Lhalf$
prior in the remainder
of this paper. 
This closed form representation introduces extra latent variables,  allowing us to circumvent sampling the difficult conditional distribution $\btau^{2}|\bbeta$. In addition, using the Laplace mixture representation, the resulting EM and LLA \citep{zou2008one} algorithms are identical. We further leverage the conjugacy structure in our model by removing certain conditional components in the Gibbs sampler without disturbing the stationary distribution. This is done by following the partially collapsed Gibbs sampling strategy \citep{park2009partially}. 

With the use of continuous shrinkage priors, the optimal Bayes point estimates under many common loss functions (e.g. $L_{1}$ and $L_{2}$) will not have exact zeros, requiring extra and ad hoc steps for variable selection. 
It is well known that penalized likelihood estimators have a Bayesian interpretation as posterior modes under the corresponding prior. Thus, it is possible to construct very flexible sparse penalty functions from a Bayesian perspective, whose solution is the generalized thresholding rule \citep{rovckova2016bayesian}. We consider a full Bayesian formulation of the exponential power prior, which introduces a non-separable Bridge (NSB) penalty
\begin{equation}\label{margpen}
{\bf pen}(\bbeta)\coloneqq -\log \int \prod_{j}\pi(\beta_{j}|\lambda)\pi(\lambda)d\lambda
\end{equation}
with the hyperparameter $\lambda$  integrated out with respect to some suitable choice of hyperprior. \cite{polson2012local} and \cite{song2017nearly} showed that if the global-local shrinkage prior has a heavy and flat tail, and allocates a sufficiently large probability mass in a very small neighbourhood of zero, then its posterior properties are as good as those of the spike-and-slab prior. However, these properties often lead to unbounded derivatives for the penalty function at zero. In particular, the proposed NSB penalty is a non-convex, non-separable and non-Lipschitz function, posing a very challenging problem for optimisation. For the second main innovation of this paper, we propose a simple coordinate descent (CD) algorithm which is capable of producing sparse solutions.
We further show that the convergence of the algorithm to the local minima is guaranteed theoretically. 

The paper is structured as follows: In Section \ref{sec:bridgeprior}, we present the normal mixture representation for $\Lhalf$ prior. 
In Section \ref{sec:postcons}, we produce theoretical conditions for posterior consistency and contraction rates. In Section \ref{sec:pcg}, we propose an efficient posterior sampling scheme based on the partially collapsed Gibbs sampler. In Section \ref{sec:VS}, we develop a coordinate descent optimisation algorithm for variable selection and study its oracle properties.
In Section \ref{sec:sim}, we provide some numerical results. Finally in Section \ref{sec:conclusion}, we conclude with some discussions.

\section{The $\Lhalf$ prior}\label{sec:bridgeprior}

In this section, we begin with the decomposition of the $\Lhalf$ prior, with $\gamma \in \{1,2,3,\ldots\}$.

\subsection{The Laplace mixture representation}
The scale mixture representation was first found by \cite{west1987scale}, 
 who showed that a function $f(x)$ is completely monotone if and only if it can be represented as a Laplace transform of some 
function $g(\cdot)$:
$$f(x)=\int_{0}^{\infty} \exp (-s x)g(s) \mathrm{d}s.$$
To represent the exponential power prior with $0<\alpha<1$ as a Gaussian mixture, we let $x=\frac{t^{2}}{2}$,
$$\exp \left(-|t|^{\alpha}\right)=\int_{0}^{\infty} \exp \left(-s t^{2} / 2\right) g(s) \mathrm{d} s, $$
where $\exp \left(-|t|^{\alpha}\right)$ is the the Laplace transform of $g(s)$ evaluated at $\frac{t^{2}}{2}$ \citep{polson2012local,polson2014bayesian}. Unfortunately, there is no general closed form expression for $g(s)$. However, for the special case $\alpha=(\frac{1}{2})^{\gamma}$ where $\gamma \in \{1,2,3 \ldots \}$, we can construct a data augmentation scheme to represent $g(s)$. We begin with the following lemma.

\begin{lemma}\label{lemma2.2}
The exponential power distribution of the form 
$
\pi(\beta)= \frac{\lambda^{2^{\gamma}}}{2 (2^{\gamma}!)} \exp \left(-\lambda\left|\beta\right|^{\frac{1}{2^{\gamma}}}\right),
$ with $\alpha=(\frac{1}{2})^{\gamma}$ with $\gamma \in \{1,2,3\ldots\}$, 
can be decomposed as

 $$\pi(\beta|s)= \frac{1}{2(2^{\gamma-1}!)s^{2^{\gamma-1}}}\exp\left(-\frac{|\beta|^{\frac{1}{2^{\gamma-1}}}}{s}\right),
\quad  s\sim{\mathrm{Gamma}\left(\frac{2^{\gamma}+1}{2},\frac{\lambda^{2}}{4}\right)}$$
or equivalently
$$ \pi(\beta|v,\lambda)= \frac{\lambda^{2^{\gamma}}}{2(2^{\gamma-1}!)v^{2^{\gamma-1}}}\exp \left(-\frac{\lambda^{2}|\beta|^{\frac{1}{2^{\gamma-1}}}}{v}\right), \quad
 v \sim \mathrm{Gamma}\left(\frac{2^{\gamma}+1}{2},\frac{1}{4}\right).$$
\end{lemma}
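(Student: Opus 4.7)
The plan is to verify the claimed decomposition by computing the marginal density of $\beta$ from the joint $\pi(\beta\mid s_\gamma)\pi(s_\gamma)$ and checking it matches the stated exponential power prior with $\alpha=(1/2)^\gamma$. Since the second (equivalent) form is obtained from the first by the affine reparameterization $s_\gamma=v_\gamma/\lambda^2$ (using that if $v_\gamma\sim\mathrm{Gamma}(a,1/4)$ then $v_\gamma/\lambda^2\sim\mathrm{Gamma}(a,\lambda^2/4)$), it suffices to handle the $s_\gamma$ formulation.

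First I would confirm that $\pi(\beta\mid s_\gamma)$ as written is a properly normalized exponential power density with shape $\alpha=(1/2)^{\gamma-1}$ and scale $s_\gamma$: the normalizing integral $\int e^{-|\beta|^{1/2^{\gamma-1}}/s_\gamma}\,d\beta$ reduces, via $u=|\beta|^{1/2^{\gamma-1}}/s_\gamma$, to a Gamma-function value, producing exactly the factor $2(2^{\gamma-1})!\,s_\gamma^{2^{\gamma-1}}$ that appears in the denominator.

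The core step is the marginalization. Writing $m=2^{\gamma-1}$ and $a=|\beta|^{1/m}$, the exponent $s_\gamma^{m}$ in the denominator of $\pi(\beta\mid s_\gamma)$ combines with the $s_\gamma^{(2m+1)/2-1}$ factor of the Gamma density to leave $s_\gamma^{-1/2}$, so the marginal becomes
$$\pi(\beta)=\frac{(\lambda^{2}/4)^{(2m+1)/2}}{2\,m!\,\Gamma(m+1/2)}\int_{0}^{\infty} s^{-1/2}\exp\!\left(-\frac{a}{s}-\frac{\lambda^{2}s}{4}\right)ds.$$
This is a generalized-inverse-Gaussian kernel; applying the identity $\int_0^\infty x^{\nu-1}e^{-a/x-bx}\,dx=2(a/b)^{\nu/2}K_\nu(2\sqrt{ab})$ with $\nu=1/2$, and using the elementary closed form $K_{1/2}(z)=\sqrt{\pi/(2z)}\,e^{-z}$, the inner integral collapses to $\tfrac{2\sqrt{\pi}}{\lambda}e^{-\lambda\sqrt{a}}=\tfrac{2\sqrt{\pi}}{\lambda}e^{-\lambda|\beta|^{1/2^{\gamma}}}$, which produces the desired exponential factor.

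Finally I would simplify the prefactor using the Legendre duplication formula $\Gamma(m+1/2)=(2m)!\sqrt{\pi}/(4^{m}m!)$. After substitution, the $\sqrt{\pi}$ cancels and the accumulated powers of $2$ and $\lambda$ combine cleanly to give $\lambda^{2m}/(2(2m)!)=\lambda^{2^{\gamma}}/(2(2^{\gamma})!)$, matching the target density. The main obstacle is purely combinatorial bookkeeping: tracking powers of $2$, $\lambda$, and the factorials through the Bessel and duplication identities. There is no conceptual difficulty, because the identity $K_{1/2}(z)=\sqrt{\pi/(2z)}\,e^{-z}$ is precisely what makes the recursion close — this is the structural reason the dyadic sequence $\alpha=(1/2)^{\gamma}$ admits a closed-form scale mixture while general $\alpha\in(0,1)$ does not.
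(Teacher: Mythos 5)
Your proposal is correct; I checked the marginalization and the constant: with $m=2^{\gamma-1}$ the integral $\int_0^\infty s^{-1/2}e^{-a/s-\lambda^2 s/4}\,ds$ does collapse to $\tfrac{2\sqrt\pi}{\lambda}e^{-\lambda\sqrt a}$, and the duplication formula $\Gamma(m+\tfrac12)=(2m)!\sqrt\pi/(4^m m!)$ reduces the prefactor to $\lambda^{2^\gamma}/(2(2^\gamma)!)$ exactly. The route is the reverse of the paper's: the paper \emph{derives} the mixing density by writing the target as a Laplace transform in the variable $x=|\beta|^{1/2^{\gamma-1}}$ and quoting the inverse Laplace transform $\mathcal{L}^{-1}\bigl(e^{-\lambda\sqrt{x}}\bigr)(t)\propto t^{-3/2}e^{-\lambda^2/(4t)}$, then recognizes (after $t=1/s$) that this is a Gamma density; you instead \emph{verify} the stated Gamma mixing density by integrating it out with the generalized-inverse-Gaussian identity and $K_{1/2}(z)=\sqrt{\pi/(2z)}\,e^{-z}$. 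These hinge on the same analytic fact — the substitution $t=1/s$ turns the paper's Laplace-transform integral into precisely your GIG integral — so the proofs are mathematically equivalent, but each direction buys something: the paper's inversion explains how the mixing density is discovered (and is what drives the recursion in Theorem 2.3), while your verification is self-contained, avoids appealing to a transform table, and makes the normalizing-constant bookkeeping fully explicit, which the paper's proof largely leaves implicit.
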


Using Lemma \ref{lemma2.2}, the main Theorem \ref{thm:2.3} provides the analytic expressions under the scale mixture of normal representation.

\begin{theorem}\label{thm:2.3}
The exponential power prior with $\alpha=(\frac{1}{2})^{\gamma}$ where $\gamma \in \{1,2,3\ldots\}$ can be represented as Laplace mixture with $i=1,\ldots,\gamma-1$,
$$ \beta|v_{1},\lambda \sim \mathrm{DE}\left(\frac{v_{1}}{\lambda^{2^{\gamma}}}\right), \quad
 v_{i}|v_{i+1} \sim \mathrm{Gamma}\left(\frac{2^{i}+1}{2},\frac{1}{4v_{i+1}^{2}}\right), \quad
 v_{\gamma} \sim \mathrm{Gamma}\left(\frac{2^{\gamma}+1}{2},\frac{1}{4}\right),\\$$
which can be represented as global-local shrinkage prior,
\begin{equation}\label{eq:glprior}
\begin{aligned}
&\beta |\tau^{2}, \lambda  \sim N\left(\mathbf{0},  \frac{\tau^{2}}{\lambda^{2^{\gamma+1}}}\right),\quad 
&\tau^{2}|v_{1} \sim  \mathrm{Exp}\left(\frac{1}{2v_{1}^{2}}\right), \\ 
& v_{i}|v_{i+1} \sim \mathrm{Gamma}\left(\frac{2^{i}+1}{2},\frac{1}{4v_{i+1}^{2}}\right), \quad 
& v_{\gamma} \sim \mathrm{Gamma}\left(\frac{2^{\gamma}+1}{2},\frac{1}{4}\right),\\
\end{aligned}  
\end{equation}
\noindent where  $\tau$ is the local shrinkage parameter and $\frac{1}{\lambda^{2^{\gamma}}}$ is the global shrinkage parameter. In addition, when $\gamma=1$, the terms $v_i|v_{i+1}$ vanish. 
\end{theorem}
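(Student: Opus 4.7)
The strategy is to apply Lemma~\ref{lemma2.2} iteratively, peeling off one Gamma mixing layer at a time so as to reduce the exponential power exponent from $(1/2)^\gamma$ down to $1$, at which point one is looking at a Laplace distribution; the classical Andrews--Mallows scale mixture of normals representation for the Laplace will then yield \eqref{eq:glprior}.

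Concretely, I would begin with the first representation in Lemma~\ref{lemma2.2} to write $\pi_\lambda(\beta)$ as a Gamma mixture, conditional on $s_\gamma$, of an exponential power with exponent $1/2^{\gamma-1}$ and effective rate $1/s_\gamma$. Since $\pi(\beta|s_\gamma)$ is itself an exponential power, the lemma applies recursively with $1/s_{i+1}$ playing the role of $\lambda$ at level $i$, and the resulting Gamma mixer picks up rate $(1/s_{i+1})^2/4=1/(4 s_{i+1}^2)$. After $\gamma-1$ iterations one arrives at $\beta|s_1\sim\mathrm{DE}(s_1)$ together with $s_\gamma\sim\mathrm{Gamma}((2^\gamma+1)/2,\lambda^2/4)$ and $s_i|s_{i+1}\sim\mathrm{Gamma}((2^i+1)/2,1/(4s_{i+1}^2))$ for $i<\gamma$. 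To match the theorem's standardized form, in which $v_\gamma$ is free of $\lambda$, I would then rescale $v_i=c_i s_i$, where the doubling pattern $c_i=\lambda^{2^{\gamma-i+1}}$ (equivalently $c_i=c_{i+1}^2$ with $c_\gamma=\lambda^2$) is chosen so that, on substituting $s_{i+1}=v_{i+1}/c_{i+1}$ into the rate $1/(4s_{i+1}^2)$, the resulting rate of $v_i$ collapses to $c_{i+1}^2/(c_i\cdot 4 v_{i+1}^2)=1/(4 v_{i+1}^2)$. All residual $\lambda$ dependence is thereby concentrated in the Laplace scale via $s_1=v_1/\lambda^{2^\gamma}$, recovering the first displayed representation.

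For the second display, I would substitute the Andrews--Mallows identity $\mathrm{DE}(b)\stackrel{d}{=}N(0,\tilde\tau^2)$ with $\tilde\tau^2\sim\mathrm{Exp}(1/(2b^2))$ at $b=v_1/\lambda^{2^\gamma}$, and then perform the change of variable $\tau^2=\lambda^{2^{\gamma+1}}\tilde\tau^2$; the Gaussian variance becomes $\tau^2/\lambda^{2^{\gamma+1}}$, and by the Exp scaling rule the rate of $\tau^2$ becomes $1/(2 v_1^2)$, producing \eqref{eq:glprior}. The only real hurdle is careful bookkeeping of the $\lambda$ exponents through the recursion; once the doubling pattern $c_i=c_{i+1}^2$ is identified, everything else reduces to the elementary Gamma scaling rule $X\sim\mathrm{Gamma}(k,\theta)\Rightarrow cX\sim\mathrm{Gamma}(k,\theta/c)$.
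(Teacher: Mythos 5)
Your proof is correct and follows essentially the same route as the paper: both iterate Lemma~\ref{lemma2.2} to peel the exponent from $(1/2)^{\gamma}$ down to a Laplace and then invoke the normal--exponential mixture, with your final rescaling $v_i=c_i s_i$, $c_i=c_{i+1}^2$, $c_\gamma=\lambda^2$ reproducing exactly the paper's change of variables $v_{\gamma-i}=\lambda^{2^{i+1}}s_{\gamma-i}$. The only difference is organizational --- the paper interleaves the rescaling with each application of the lemma (via $\lambda'=\lambda^{2}/v_{\gamma}$, etc.) while you defer it to one step at the end --- and your treatment of the Andrews--Mallows step is in fact more explicit than the paper's.
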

The proof can be found in Section 1 of the Appendix, together with some intuitions on why the $\Lhalf$ prior is a suitable shrinkage prior. As we can see, 
introducing latent variables $v_i$ to augment the distribution for the local shrinkage parameter $\tau$,  leads to a computationally efficient decomposition. The augmented representation now makes it easier to create computationally efficient algorithms such as MCMC or the EM algorithm.

\subsection{Choice of hyperprior}\label{sec:hyperprior}
The hyperparameter $\lambda$ controls global shrinkage and is critical to the success of the $L_{\frac{1}{2}}$ prior. It is possible 
to fix $\lambda$ by selecting a value for it via an empirical Bayes approach \citep{casella2001empirical}, however in extremely sparse cases, these estimates can collapse to 0 
\citep{scott2010bayes,datta2013asymptotic}. We therefore assign a hyperprior to $\lambda$, allowing the model to achieve a level of self-adaptivity.  Here we use
\begin{equation}\label{eq:lambdaprior}
\lambda \sim \mathrm{Gamma}(a,1/b),
\end{equation}
where the parameters $a$ and $b$ are user specified. In practice, we have found that in $n>p$ cases, the algorithm is insensitive to the values of $a$ and $b$. In $n<p$ cases, our experience for several different scenarios suggests that setting $a$ to around $0.1p$ and $b=1$ works well for $\gamma=1$ and 
setting $a=1$ and $b$ to around $0.2p$ works well for $\gamma=2$.

\section{Posterior consistency and  contraction rate}\label{sec:postcons}
We focus on the high dimensional  
case where the number of predictors $p$ is much larger than the number of observations $n$ ($p > n$) and most of the coefficients in the parameter vector $\bbeta=(\beta_1,\ldots, \beta_p)$ are zero.
We consider the $\Lhalf$ prior  
on $\bbeta$ under the following setup,
\begin{equation*}\label{eq:slr}
\bY | \bbeta, \sigma^{2}  \sim \mathcal{N}_{n}\left(\bX\bbeta, \sigma^{2} \mathbf{I}_{\mathbf{n}}\right), \quad
 \pi(\bbeta|\lambda) \propto  \exp \left(-\lambda\sum_{j=1}^{p}\left|\beta_{j}\right|^{\frac{1}{2^{\gamma}}}\right), \quad
 \pi\left(\sigma^{2}\right) \propto \sigma^{-2}.
\end{equation*}

\cite{mallick2018bayesian} showed strong consistency of the posterior under exponential power prior in the case $p < n$.
When $p  > n$, the non-invertibility of the design matrix complicates analysis. In the following, we show that under the local invertibility assumption of the Gram matrix $\frac{\bX^{T}\bX}{n}$, the $L_{2}$ contraction rates for the posterior of $\bbeta$ are nearly optimal and no worse than the convergence rates achieved
by the spike-and-slab prior \citep{castillo2015bayesian}. Therefore, there is no estimation performance loss due to switching from the spike-and-slab prior to the $L_{\frac{1}{2}}$ prior.

In what follows, we rewrite the dimension $p$ of the model by $p_{n}$ to indicate that the number of predictors can increase with the sample size $n$, and similarly, we rewrite $\lambda$ as $\lambda_n$. We use $\bbeta_{0}$ and $\sigma_{0}$ to indicate the true regression coefficients and the true standard deviation. Let $S_{0}$ be the set containing the indices of the true nonzero coefficients, where $S_{0} \subseteq \left\{1,..,p_{n}\right\}$, then $s_{0}=|S_{0}|$ denotes the true size of the model. We use $\Pi_{n}(\cdot)$ and $\Pi_{n}(\cdot \mid \boldsymbol{Y})$ to emphasise that both the prior and posterior distributions are sample size dependent. We let $P_{\btheta_{0}}^{n}$  denote the probability measure under the truth $\btheta_0=(\bbeta_0, \sigma_0)$. We now state our main theorem on the posterior contraction
rates for exponential power prior based on the following assumptions:\\
{\bf A1.} The dimension is high,  $p_{n} \succ n$ and $\log(p_{n})=o(n)$.\\
{\bf A2.} The true number of nonzeros in $\bbeta_{0}$ satisfies $s_{0}=o(n/\log p_{n})$. \\
{\bf A3.} All the covariates are uniformly bounded. In other words, there exists a constant $k>0$ such that $\lambda_{\max}(\bX^{T}\bX)\leq{kn^{\alpha}}$ for some  $\alpha \in{[1,+\infty)}$. \\
{\bf A4.} There exist constants $v_{2}>v_{1}>0$ and an integer $\widetilde{p}$ satisfying $s_{0}=o(\widetilde{p})$ and $\widetilde{p}=o(s_{0}\log n)$, so that $nv_{1}\leq{\lambda_{min}(\bX_{s}^{T}\bX_{s})}\leq{\lambda_{max}(\bX_{s}^{T}\bX_{s})}\leq{nv_{2}}$ for any model of size $|s|\leq{\widetilde{p}}$.\\
{\bf A5.} $||\bbeta_{0}||_{\infty}={E}$ and $E$ is some positive number independent of $n$.

Assumption (A1) allows the number of covariates $p$ to grow at a nearly exponential rate with sample size $n$. Assumption (A2) specifies the growth rate for the true model size $s_{0}$. Assumption (A3) bounds the eigenvalues of $\bX^{T}\bX$ from above and is less stringent than requiring all the eigenvalues of the Gram matrix $(\bX^{T}\bX/n)$ to be bounded away from infinity. Assumption (A4) ensures that $\bX^{T}\bX$ is locally invertible over a small subspace. Finally, Assumption (A5) requires boundedness of the maximum signal size for the true $\bbeta$.

\begin{theorem}\label{thm:PC}
(Posterior contraction rates) Let $\epsilon_{n}=\sqrt{s_{0}\log p_{n}/n}$ and suppose that assumptions (A1)-(A5) hold. Under the linear regression model with unknown Gaussian noise, we endow $\bbeta$ with the exponential power prior $\Pi_{n}(\bbeta)=\frac{\lambda_{n}^{\frac{1}{\alpha}}}{2\Gamma(1+\frac{1}{\alpha})}\exp{\left(-\lambda_{n}|\bbeta|^{\alpha}\right)}$,
where $0<\alpha\leq\frac{1}{2}$ and $\lambda_{n} \propto (\log p_{n})$ and $\sigma^{2} \sim \mathrm{InvGamma}(\delta,\delta)$. Then 
$$
\begin{aligned}
& \Pi_{n}\left(\boldsymbol{\beta}:\left\|\boldsymbol{\beta}-\boldsymbol{\beta}_{0}\right\|_{2} \gtrsim  \epsilon_{n} \mid \boldsymbol{Y}\right) \rightarrow 0 \quad \text{a.s.} \,\, P_{\theta_{0}}^{n}  \,\,  \text{as} \,\, n \rightarrow  \infty,\\
& \Pi_{n}\left(\boldsymbol{\beta}:\left\|\boldsymbol{X} \boldsymbol{\beta}-\boldsymbol{X} \boldsymbol{\beta}_{0}\right\|_{2} \gtrsim  \sqrt{n} \epsilon_{n} \mid \boldsymbol{Y}\right) \rightarrow 0 \quad \text{a.s.} \,\, P_{\theta_{0}}^{n}  \,\,  \text{as} \,\, n \rightarrow  \infty,\\
&\Pi_{n}\left(\sigma^{2}:\left|\sigma-\sigma_{0}\right|\gtrsim  \sigma_{0}\epsilon_{n} \mid \bY\right) \rightarrow 0 \quad \text{a.s.} \,\, P_{\theta_{0}}^{n}  \,\,  \text{as} \,\, n \rightarrow  \infty,
\end{aligned}
$$
where $\Pi_{n}(\cdot \mid \boldsymbol{Y})$ denotes the posterior distribution under the prior $\Pi_{n}$.
\end{theorem}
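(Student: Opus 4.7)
The plan is to verify the three standard conditions of the Ghosal--Ghosh--van der Vaart posterior contraction framework, as adapted to high-dimensional sparse regression. With $\epsilon_n = \sqrt{s_0 \log p_n / n}$, this requires: (i) a prior-mass lower bound on a Kullback--Leibler neighborhood of $(\bbeta_0, \sigma_0^2)$ of order $\exp(-c_1 n \epsilon_n^2)$; (ii) a sieve $\Theta_n \subset \mathbb{R}^{p_n}$ whose prior complement has mass $\lesssim \exp(-(c_1+c_2) n \epsilon_n^2)$; and (iii) exponentially powerful tests separating the truth from $\{\bbeta : \|\bbeta - \bbeta_0\|_2 > M\epsilon_n\} \cap \Theta_n$. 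The three stated conclusions (for $\|\bbeta-\bbeta_0\|_2$, $\|\bX\bbeta - \bX\bbeta_0\|_2$, and $|\sigma - \sigma_0|$) all follow from the same joint posterior contraction by exploiting assumptions A3--A4 and the marginal Inverse-Gamma prior for $\sigma^2$.

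For the prior concentration step I would partition $\{1,\ldots,p_n\}$ into the true support $S_0$ (size $s_0$) and $S_0^c$. On $S_0$, assumption A5 combined with the calibration $\lambda_n \asymp \log p_n$ yields, for each $j \in S_0$ and $\delta_n$ of order $\epsilon_n/\sqrt{s_0 p_n}$, a density lower bound $\pi_{\lambda_n}(\beta) \gtrsim \lambda_n^{2^\gamma} \exp(-\lambda_n E^{1/2^\gamma})$ on the interval $[\beta_{0,j}-\delta_n, \beta_{0,j}+\delta_n]$; the product across the $s_0$ active coordinates is of order $\exp(-C s_0 \log p_n) = \exp(-C n \epsilon_n^2)$. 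On $S_0^c$, for $|\beta| \lesssim \lambda_n^{-2^\gamma}$ the exponential factor satisfies $\exp(-\lambda_n |\beta|^{1/2^\gamma}) \gtrsim 1$, so each zero-coordinate contributes $\gtrsim \delta_n \lambda_n^{2^\gamma}$ to the relevant integral. For the sieve I would take $\Theta_n = \{\bbeta : |\{j : |\beta_j| > a_n\}| \leq \widetilde{p}\}$ with $\widetilde{p}$ from A4; the coordinate tail $P(|\beta_j| > a_n)$ reduces via the substitution $u = \lambda_n |\beta|^{1/2^\gamma}$ to an incomplete gamma tail of order $\exp(-\lambda_n a_n^{1/2^\gamma})$ times a polynomial, and calibrating $a_n$ so that this is $o(p_n^{-(1+c)})$ and applying a Chernoff bound on the binomial count of exceedances delivers the required sieve mass bound.

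The main obstacle will be calibrating $\lambda_n$, $\delta_n$, and $a_n$ jointly so that the prior concentration on the $s_0$ active coefficients and the near-zero mass on the $p_n - s_0$ inactive coefficients both survive the combinatorial factor $\binom{p_n}{s_0} \asymp \exp(s_0 \log p_n)$ induced by the unknown support. The spike at zero of the $\Lhalf$ prior with $\alpha = 1/2^\gamma \leq 1/2$ is precisely what makes this balance feasible: it produces more mass near zero than Laplace, tightening the sieve bound, while its normalizing constant $\lambda_n^{2^\gamma}$ still permits adequate density at the bounded nonzero $\beta_{0,j}$. Once Hellinger-distance contraction is established on the joint parameter, the three stated conclusions follow: the $\|\bX(\bbeta - \bbeta_0)\|_2$ rate is immediate from the definition of the testing semi-metric under A3; the $\|\bbeta - \bbeta_0\|_2$ rate follows by inverting the restricted eigenvalue condition A4 on the sieve (since on $\Theta_n$ the relevant vectors are effectively $2\widetilde{p}$-sparse up to a negligible $\ell_2$ tail controlled by $a_n$); and the $|\sigma - \sigma_0|$ rate follows by marginalizing the joint posterior and applying the standard Inverse-Gamma posterior inversion. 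The tests in (iii) are then routine likelihood-ratio tests based on $\|\bX(\bbeta - \bbeta_0)\|_2^2$, which achieve the necessary exponential separation because A4 makes the $\ell_2$ and prediction semi-metrics equivalent on the sieve.
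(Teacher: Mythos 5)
Your overall skeleton matches the paper's: both verify the Ghosal--Ghosh--van der Vaart prior-mass, sieve, and testing conditions at rate $\epsilon_n=\sqrt{s_0\log p_n/n}$, and your prior-concentration step (coordinatewise density lower bound on $S_0$ at scale roughly $\epsilon_n/(p_n\sqrt{k})$ per coordinate, plus near-zero mass on $S_0^c$ controlled by the spike of the $\alpha\le 1/2$ prior) is essentially the paper's Lemma on the exponential power prior combined with the events $\mathcal{A}_1,\mathcal{A}_2$. However, there are two concrete gaps.

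First, your sieve $\Theta_n=\{\bbeta:|\{j:|\beta_j|>a_n\}|\le\widetilde p\}$ is too small a specification. The paper's sieve additionally intersects with $\{\bbeta:\sum_j|\beta_j|^{\alpha}\le C_3Es_0\}$ and, crucially, with $\{n^{-1}\le\sigma^2\le p_n^{C_3s_0/\delta}\}$. Without the $\ell_\alpha$-ball the effective support of $\bbeta$ on the sieve is unbounded and the covering-number bound $\log N^*\lesssim n\epsilon_n^2$ fails; without the $\sigma^2$ truncation the sieve has infinite entropy in the $(\bbeta,\sigma^2)$ parameter and, as noted below, the second-moment bound needed for the tests requires $\sigma^2\ge n^{-1}$. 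You would also need to verify that the prior mass of these two extra complements is exponentially small (the paper does this with Chernoff bounds for the Gamma and Inverse-Gamma tails), which your proposal does not address.

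Second, and more fundamentally, your testing step assumes that "Hellinger-distance contraction" on the joint parameter yields the three Euclidean conclusions, with the $|\sigma-\sigma_0|$ rate obtained "by marginalizing" and the tests being "routine likelihood-ratio tests based on $\|\bX(\bbeta-\bbeta_0)\|_2^2$." This is precisely the step that fails when $\sigma^2$ is unknown: closeness in average squared Hellinger distance between $N(\bX\bbeta,\sigma^2 I)$ and $N(\bX\bbeta_0,\sigma_0^2 I)$ does not imply closeness of $\bX\bbeta$ to $\bX\bbeta_0$ or of $\sigma$ to $\sigma_0$, because the Hellinger affinity can stay bounded away from one through the variance alone. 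The paper instead works with the root average R\'enyi divergence of order $1/2$, computes it in closed form for two Gaussians (giving the two separate terms $\tfrac12\log\frac{\sigma_1^2+\sigma_2^2}{2\sigma_1\sigma_2}$ and $\frac{\|\bX\bbeta_1-\bX\bbeta_2\|^2}{4n(\sigma_1^2+\sigma_2^2)}$, each of which can then be inverted to give the stated Euclidean rates), and constructs Neyman--Pearson tests on a fine partition of the sieve into pieces of radius $\epsilon_n/\sqrt{n}$ in $\bbeta$ and $n^{-2}$ in $\sigma^2$, controlling the type-II error via Cauchy--Schwarz and the bound $\mathbb{E}_{f_{\theta_1}^n}(f_\theta^n/f_{\theta_1}^n)^2\le e^{cn\epsilon_n^2}$, which is where the sieve restriction $\sigma^2\ge n^{-1}$ enters. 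Your proposal needs this (or an equivalent) device; as written, the passage from the testing semimetric to the three displayed conclusions is not justified.
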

\begin{theorem}\label{thm:dim}
(Dimensionality) We define the generalized dimension as
$$\gamma_{j}(\beta_{j})=I(|\beta_{j}|>a_{n}) \, \, \, \text{and} \, \, \, |\gamma(\bbeta)|=\sum_{j=1}^{p_{n}}\gamma_{j}(\beta_{j}),$$
where $a_{n}=\frac{\sigma_{0}}{\sqrt{2k}}\sqrt{s_{0}\log p_{n}/n}/p_{n}$ and $\lim_{n\rightarrow{\infty}}a_{n}=0$. Then under assumptions (A1-A5), for sufficiently large $M_{3}>0$, we have
$$\sup _{\theta_{0}} P_{\theta_{0}}^{n} \Pi\left(\boldsymbol{\beta}:|\mathcal{\gamma}(\boldsymbol{\beta})|>M_{3} s_{0} \mid \boldsymbol{Y}\right) \rightarrow 0.$$
\end{theorem}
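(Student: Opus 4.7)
The plan is to bootstrap the dimensionality bound from the contraction rates of Theorem~\ref{thm:PC}, using the restricted-eigenvalue assumption A4 to convert prediction-error concentration into a sparsity constraint on the posterior. The heuristic is that if the posterior puts mass on $\bbeta$ with many coordinates exceeding $a_n$, these ``false positives'' would inflate $\|\bX\bbeta-\bX\bbeta_0\|_2$ beyond what Theorem~\ref{thm:PC} permits, provided one can bring A4 to bear on a model of moderate size.

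Concretely, I would first split
$$\Pi_n\bigl(|\gamma(\bbeta)|>M_3 s_0\mid \bY\bigr)\le \Pi_n\bigl(\|\bX(\bbeta-\bbeta_0)\|_2>C\sqrt{n}\,\epsilon_n\mid \bY\bigr)+\Pi_n\bigl(|\gamma(\bbeta)|>M_3 s_0,\ \|\bX(\bbeta-\bbeta_0)\|_2\le C\sqrt{n}\,\epsilon_n\mid \bY\bigr),$$
making the first summand vanish directly by Theorem~\ref{thm:PC}. On the event inside the second summand, define $S=\{j:|\beta_j|>a_n\}\cup S_0$. For any fixed $M_3$ and all $n$ large, $|S|\le (M_3+1)s_0\le\tilde p$, because A4 forces $\tilde p/s_0\to\infty$. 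Let $\tilde{\bbeta}$ be the restriction of $\bbeta$ to $S$; by A3 together with the smallness of $a_n$, $\|\bX(\tilde{\bbeta}-\bbeta)\|_2=o(\sqrt{n}\,\epsilon_n)$, so $\|\bX(\tilde{\bbeta}-\bbeta_0)\|_2\lesssim\sqrt{n}\,\epsilon_n$. Since $\tilde{\bbeta}-\bbeta_0$ is supported on $S$, applying the lower-eigenvalue part of A4 to this model gives $\|\tilde{\bbeta}-\bbeta_0\|_2\lesssim\epsilon_n/\sqrt{v_1}$. Combining this with the coordinate-wise lower bound $\|\tilde{\bbeta}-\bbeta_0\|_2^2\ge(M_3-1)s_0\,a_n^2$ coming from the $(M_3-1)s_0$ coordinates in $\{j:|\beta_j|>a_n\}\setminus S_0$ yields a quantitative constraint on $M_3$ that one tries to violate for $M_3$ large.

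The main obstacle is sharpening this last comparison enough to force a universal-constant $M_3$: because $a_n=\Theta(\epsilon_n/p_n)$ is much smaller than $\epsilon_n$, the straightforward $L_2$ inequality only rules out $M_3\gtrsim p_n^2/s_0$, which is far too weak. I expect the fix to be to promote the $L_2$ bound to an $L_1$ compatibility-type inequality (which follows from A4 through a standard restricted-eigenvalue argument), giving $\|\bbeta-\bbeta_0\|_1\lesssim\sqrt{s_0}\,\epsilon_n$, and then match this against the $L_1$ false-positive contribution $(M_3-1)s_0\,a_n$, which together with the specific form of $a_n$ and the constants $v_1,v_2,k,\sigma_0$ from A3--A4 should deliver a $\theta_0$-free threshold $M_3^\ast$. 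Uniformity in $\theta_0$ is automatic because each bound in the argument depends on the true parameter only through the constants $E$, $k$, $v_1$, $v_2$, and $\sigma_0$ appearing in A3--A5.
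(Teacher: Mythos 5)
There is a genuine gap here, and you have in fact put your finger on it yourself: the contraction-based route cannot work for this theorem, and the $L_1$ ``fix'' you propose does not repair it. The threshold is $a_{n}=\frac{\sigma_{0}}{\sqrt{2k}}\epsilon_{n}/p_{n}$, so even the sharpest compatibility-type inequality one could extract from A4, namely $\|\bbeta-\bbeta_{0}\|_{1}\lesssim\sqrt{s_{0}}\,\epsilon_{n}$ on the contraction event, only forces $(M_{3}-1)s_{0}a_{n}\lesssim\sqrt{s_{0}}\,\epsilon_{n}$, i.e.\ $M_{3}\lesssim p_{n}/\sqrt{s_{0}}$ --- still growing with $n$, not a universal constant. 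This is not a matter of sharpening constants: a vector with $\bbeta-\bbeta_{0}$ having $p_{n}$ coordinates each of size $\epsilon_{n}/\sqrt{p_{n}}\gg a_{n}$ satisfies $\|\bbeta-\bbeta_{0}\|_{2}\le\epsilon_{n}$ while having generalized dimension $p_{n}$, so no norm-contraction statement can bound the number of exceedances of a threshold this small. The smallness of $a_{n}$ is deliberate, but it is exploited through the \emph{prior}, not through the posterior contraction.

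The paper's proof is the standard ``prior mass versus evidence'' argument and uses no test and no contraction result at all. Lemma 6.1 of the supplement gives $\int_{|\beta|\ge a_{n}}\Pi_{n}(\beta)\,d\beta\le p_{n}^{-(1+u)}$, so the number of exceedances is stochastically dominated under the prior by a $\mathrm{Binomial}(p_{n},p_{n}^{-(1+u)})$ variable, and the binomial tail bound in Part 2 of the proof of Theorem 3.1 yields $\Pi_{n}(|\gamma(\bbeta)|>C_{3}s_{0})\le\exp(-(C_{1}+2)n\epsilon_{n}^{2})$. On the event $Z_{n}$ that the marginal likelihood exceeds $e^{-(1+D+C_{1})n\epsilon_{n}^{2}}$ (which holds with probability $1-O(1/(D^{2}n\epsilon_{n}^{2}))$ by the evidence lower bound of Lemma 6.4, the restatement of Lemma 8.21 of Ghosal and van der Vaart), Fubini gives $P_{\theta_{0}}^{n}\Pi(|\gamma(\bbeta)|>C_{3}s_{0}\mid\bY)\mathbf{1}_{Z_{n}}\le e^{(1+D+C_{1})n\epsilon_{n}^{2}}\Pi_{n}(|\gamma(\bbeta)|>C_{3}s_{0})\rightarrow 0$. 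If you want to salvage your own outline, you would have to replace the middle of your argument with exactly this prior-mass computation; the decomposition over the contraction event and the appeal to A4 are not needed and cannot be made to close.
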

Detailed proofs for Theorems \ref{thm:PC} and \ref{thm:dim} can be found in Section 5 of the Appendix.

\section{MCMC for Bayesian $\Lhalf$ regression}\label{sec:pcg}

For the linear model with Gaussian error of the form  \eqref{eq:lm}, together with the $\Lhalf$ prior of the form \eqref{eq:glprior}, and the hyperprior \eqref{eq:lambdaprior},  standard implementation of the Gibbs sampler is not efficient as the full 
conditional distributions for $\bv_{1}$,...,$\bv_{\gamma}$ and $\lambda$ do not follow any standard distributions. Additional sampling strategies, such as the Metropolised-Gibbs sampler or adaptive rejection sampling \citep{gilksWilks92} would be needed. The partially collapsed Gibbs sampler (PCG) \citep{park2009partially} speeds up convergence by partially marginalising some of the conditional components from its parent Gibbs sampler, which also provides closed form conditional posteriors in our case. \cite{van2008partially} showed that, by applying three basic steps: marginalize, permute and trim, one can construct the PCG sampler which maintains the target joint posterior distribution as the stationary distribution. 

Following the prescription in \cite{van2008partially}, the first step of the PCG sampler is marginalisation,  where we sample $\boldsymbol{\tau}^{2}$,   $\boldsymbol{v}_{1}$,...,$\boldsymbol{v}_{\gamma}$ multiple times in the Gibbs cycle without perturbing the stationary distribution. In the second step, we permute the ordering of the update which helps us identify trimming strategies. Based on the final trimming step, we are able to construct a sequential sampling scheme to sample the posterior where each update is obtained from a closed form distribution with some of the conditional components being removed.
Details of the three steps and derivations for the conditional distributions are given in Section 2 of the Appendix. Note that the PCG sampler for $\Lhalf$ prior can also be extended to normal mixture likelihoods (See Section 4 of the Appendix for extensions to logistic and quantile regressions). 
We provide the sampling scheme for the model in Equation \eqref{eq:lm} in Algorithm 1.

\begin{algorithm}[h]
Initialise the parameters $\bbeta, \lambda, \bv,\btau^2, \sigma^2$  and set $t=1$, sequentially update the parameters until at $t=T$, via
\begin{equation*}
\begin{aligned}
{\bf S1.}   \quad &  \mbox{Sample }  \bbeta \mid \tau^{2},\lambda,\sigma^{2} 
\sim \mathrm N_{p}\left((\bX^{T}\bX+\sigma^{2}\lambda^{2^{\gamma+1}}\bD_{\btau^{2}}^{-1})^{-1}\bX^{T}\bY,\sigma^{2}(\bX^{T}\bX+\sigma^{2}\lambda^{2^{\gamma+1}}\bD_{\tau^{2}}^{-1})^{-1} \right)\\
{\bf S2.}   \quad &  \mbox{Sample } \lambda \mid \bbeta
\sim \mathrm{Gamma}\left(2^{\gamma}p+a,\sum_{j=1}^{p}\left|\beta_{j}\right|^{\frac{1}{2^{\gamma}}}+\frac{1}{b}\right)\\
{\bf S3.} \quad &  \mbox{Sample } \frac{1}{{v}_{\gamma, j}} \mid \beta,\lambda \sim \mathrm{InvGaussian}\left (\frac{1}{2\lambda|\beta_{j}|^{\frac{1}{2^{\gamma}}}},\frac{1}{2}\right), \quad j=1,...,p\\
& \mbox{\bf{If}} \,\, \gamma \geq 2: \\
&\quad \mbox{\bf{For}}\,\, i=\gamma-1 \,\, \text{to} \,\, 1: \\
& \quad \quad \mbox{Sample } \frac{1}{{v}_{i,j}} \mid \beta, \lambda, v_{i+1,j},  \sim \mathrm{InvGaussian}\left (\frac{1}{2v_{i+1,j}\lambda|\beta_{j}|^{\frac{1}{2^{i}}}},\frac{1}{2{v}_{i+1,j}^{2}}\right ),\quad j=1,...,p\\ 
& \quad \mbox{\bf{End for}}\\
& \mbox{{\bf End if}}\\
{\bf S4.}   \quad &  \mbox{Sample} \,\, \frac{1}{{\tau}_{j}^{2}} \mid \beta,\lambda,v_{1,j}  \sim \mathrm{InvGaussian}\left (\frac{1}{{\lambda}^{2^{\gamma}}{v}_{1,j}|\beta_{j}|},\frac{1}{{v}_{1,j}^{2}}\right ), \quad j=1,...,p\\
{\bf S5.}   \quad &  \mbox{Sample }  {\sigma}^{2} \mid \bbeta
\sim  \mathrm{InvGamma}\left(\frac{n}{2},\frac{1}{2}(\bY-\bX\bbeta)^{T}(\bY-\bX\bbeta)\right )\\
\end{aligned}
\end{equation*}
\caption{PCG $\Lhalf, \gamma \in \{1,2,3,\ldots\}$ update algorithm}\label{alg:PCG2}
\end{algorithm}

\section{Variable selection: posterior mode search via coordinate descent optimisation}\label{sec:VS}
Whilst the $L_{\frac{1}{2}}$ prior presented in Section \ref{sec:bridgeprior} - \ref{sec:pcg} produces full posterior distributions for all the parameters, the posterior probability measure of $\beta_{j}=0$ is always zero, making variable selection difficult. One approach may be to check whether the posterior credible interval contains zero or not at a nominal level, say, $95\%$ \citep{tadesse2021handbook}.  However, it is not clear how the level can be chosen optimally. If the goal of inference is to identify important variables very quickly, we propose below a fast optimisation strategy which searches for the posterior modes and is capable of producing sparse solutions suitable for variable selection.

Coordinate descent types of algorithms have been developed for Bridge penalties previously, see for example, \cite{marjanovic2012l_q,marjanovic2013exact,marjanovic2014l_}. However, the Bridge penalty is limited by its lack of ability to adapt to the sparsity pattern across the coordinates. 
Without knowing the true sparsity, specifying the global shrinkage parameter $\lambda$ can be challenging. In the spirit of the full Bayesian treatment, assigning a suitable hyperprior for $\lambda$ can also achieve a degree of self-adaptiveness for the optimisation algorithm, see also \cite{rovckova2016bayesian,rovckova2018spike}. With this motivation, we first integrate out $\lambda$ with respective to the hyperprior in (\ref{eq:lambdaprior}) to obtain the marginalised prior $\pi(\bbeta)$.  Then together with the Jeffreys’ prior $\pi\left(\sigma^{2}\right) \propto \sigma^{-2}$, the marginalised log  posterior distribution can be written as:
\begin{equation}\label{eq:objective}
\begin{aligned}
\log \pi\left(\bbeta, \sigma^{2}\right |\bY) = & -\frac{1}{2 \sigma^{2}}\left\|\bY-\bX\bbeta\right\|_{2}^{2}-(n+2) \log \sigma+\log \int \pi(\bbeta|\lambda)\pi(\lambda)d\lambda+C\\
= & -\frac{1}{2 \sigma^{2}}\left\|\bY-\bX\bbeta\right\|_{2}^{2}-(n+2) \log \sigma+\log \pi(\bbeta)+C\\
=& -\frac{1}{2 \sigma^{2}}\left\|\bY-\bX\bbeta\right\|_{2}^{2}-(n+2) \log \sigma-\mathbf{pen}(\bbeta)+C',
\end{aligned}
\end{equation} 
where $C$ and $C'$ are constant terms and $\mathbf{pen}(\bbeta)=(2^{\gamma}p+a)\log \left(\sum_{j=1}^{p}|\beta_{j}|^{\frac{1}{2^{\gamma}}}+1/b\right)$
is the non-separable Bridge penalty (NSB).

When $\sigma^{2}$ is unknown, the non-convexity of the Gaussian negative log-likelihood is a well known difficulty for optimisation \citep{buhlmann2011statistics}. For our model,  optimising the objective function in Equation (\ref{eq:objective}) appears to work well in low dimensional cases $n>p$, but often shrinks everything to zero when $n<p$. It is not clear why variance estimation can sometimes fail in high dimensions. Here we suggest the following simple strategy to tackle this issue.

Suppose $\sigma_{0}^{2}$ is an unknown constant, under  the Bridge penalty, the objective function can be rewritten as 
\begin{equation*}
\arg \min_{\bbeta} \frac{1}{2}\|\bY-\bX \bbeta\|^{2}+\sigma_{0}^{2}\lambda \sum_{j=1}^{p}|\beta_{j}|^{\frac{1}{2^{\gamma}}}.
\end{equation*}
If we define $\lambda^{'}=\sigma_{0}^{2}\lambda$ as a new penalty parameter,  and assign the gamma prior in (\ref{eq:lambdaprior}) to $\lambda^{'}$ and integrate it out, we end up with an alternative objective function to \eqref{eq:objective} with the same penalty,
\begin{equation}\label{eq:objectiveS} 
 \log \pi\left(\bbeta\right |\bY)= -\frac{1}{2}\left\|\bY-\bX\bbeta\right\|_{2}^{2}-\mathbf{pen}(\bbeta)+C''. \\
\end{equation}
where $C''$ is constant term. We then propose a two stage approach. In stage one,
we find $\hat{\bbeta}$ and $\hat{s}$ (the number of nonzero elements in $\hat{\bbeta}$) by maximising \eqref{eq:objectiveS}. In stage two, we compute the variance using the estimator:
\begin{equation}\label{eq:sigma2}
\hat{\sigma}^{2}=\frac{\|\bY-\bX\hat{\bbeta}\|_{2}^{2}}{n-\hat{s}}.
\end{equation}
\cite{fan2011nonconcave} studied the oracle properties of the non-concave penalized likelihood estimator in a high dimensional setting with the two stage approach. \cite{fan2012variance}  showed that the variance can be consistently estimated by 
Equation (\ref{eq:sigma2}). 
The main difference in our approach here is that rather than using cross validation to determine $\lambda'$, we marginalise over $\lambda'$.

\subsection{The KKT condition and coordinate descent algorithm}\label{sec:CDalg}
For the optimisation of $\bbeta$, \cite{rovckova2016bayesian} gave the following necessary Karush-Kuhn-Tucker (KKT) condition for the global mode:
\begin{equation}\label{eq:rokova}
\beta_{j}=(\bX_{j}^{T}\bX_{j})^{-1}\left(|z_{j}|- \frac{\partial \operatorname{\bf pen}(\bbeta) }{\partial |\beta_{j}|}\right)_{+}\mathrm{sign}(z_{j}),
\end{equation}
where $z_{j}=\bX_{j}^{T}(\bY-\bX_{-j}\beta_{-j})$ and $(z)_+ = \max(0,z)$. The derivatives of $\operatorname{\bf pen}(\bbeta)$ play a crucial role here. For example, for fixed $\lambda$ under the Laplace prior, the solution is the LASSO estimator $\beta_{j}=(\bX_{j}^{T}\bX_{j})^{-1}\left(|z_{j}|-\sigma^{2}\lambda\right)_{+}\mathrm{sign}(z_{j})$. 
For NSB, we integrate out the hyperparameter $\lambda$ with respect to the gamma prior in (\ref{eq:lambdaprior}), which gives us
\begin{equation*}
\begin{aligned}
\pi(\beta_{1},...,\beta_{p}) & =\frac{(2^{\gamma}p+a-1)!}{\sqrt{\pi b}2^{p}[(2^{\gamma})!]^{p}} \frac{1}{(\sum_{j=1}^{p}|\beta_{j}|^{\frac{1}{2^{\gamma}}}+1/b)^{2^{\gamma}p+a}}.
\end{aligned}
\end{equation*}
Then
$$
\frac{\partial \operatorname{\bf pen}(\bbeta) }{\partial |\beta_{j}|} =- \frac{\partial \log \pi(\bbeta)}{\partial |\beta_{j}|}
= \frac{(2^{\gamma}p+a)|\beta_{j}|^{\frac{1}{2^{\gamma}}-1}}{2^{\gamma}(\sum_{j=1}^{p}|\beta_{j}|^{\frac{1}{2^{\gamma}}}+\frac{1}{b})}
=\left(p+\frac{a}{2^{\gamma}}\right)\frac{1}{|\beta_{j}|^{1-\frac{1}{2^{\gamma}}}}\frac{1}{\sum_{j=1}^{p}|\beta_{j}|^{\frac{1}{2^{\gamma}}}+\frac{1}{b}},
$$
and thus we have
\begin{equation}\label{eq:solKKT}
\beta_{j}=(\bX_{j}^{T}\bX_{j})^{-1}\left(|z_{j}|-\frac{C_{1}}{|\beta_{j}|+C_{2}|\beta_{j}|^{1-\frac{1}{2^{\gamma}}}}\right)_{+}\mathrm{sign}(z_{j}),
\end{equation}
where we set
\begin{equation}\label{eq:const}
C_{1}=\left(p+\frac{a}{2^{\gamma}}\right), \quad C_{2}=\sum_{i \neq j}|\beta_{i}|^{\frac{1}{2^{\gamma}}}+\frac{1}{b}. 
\end{equation}
For $\gamma>0$, the shrinkage term $\frac{\partial \operatorname{\bf pen}(\bbeta) }{\partial |\beta_{j}|}$ is global-local adaptive, it depends on the information both from itself and from the other coordinates \citep{polson2012local}. While for $\gamma=0$, $\frac{\partial \operatorname{\bf pen}(\bbeta) }{\partial |\beta_{j}|}=\frac{1}{\sum_{j=1}^{p} |\beta_{j}|+1/b}$, which does not have the local adaptive properties.

However, the KKT condition given by Equation \eqref{eq:rokova} does not apply to penalties with unbounded derivatives. In fact, the set of $\hat{\bbeta}$ satisfying Equation \eqref{eq:rokova} is only a superset of the solution set for the NSB penalty. We can see that for $\gamma>0$, the solution $\beta_j=0$ always satisfies Equation \eqref{eq:solKKT}. Here, we derive the true KKT condition for our penalty, which serves as a basis for the CD algorithm.

\begin{theorem}\label{thm:betasol}
Let the constants $\tbeta_{j}$ and $h(\tbeta_j)$, be such that they satisfy the following conditions
\begin{equation}\label{eq:betasol}
\begin{aligned}
1 & =(\bX_{j}^{T}\bX_{j})^{-1}C_{1}\frac{1+(1-\frac{1}{2^{\gamma}})C_{2}\tbeta_{j}^{-\frac{1}{2^{\gamma}}}}{\tbeta_{j}^{2}\left(1+C_{2}\tbeta_{j}^{-\frac{1}{2^{\gamma}}}\right)^{2}}\\
h(\tbeta_{j}) & =\tbeta_{j}+(\bX_{j}^{T}\bX_{j})^{-1}\frac{C_{1}}{\tbeta_{j}+C_{2}\tbeta_{j}^{(1-\frac{1}{2^{\gamma}})}}.
\end{aligned}
\end{equation}
Then the solutions satisfy 
\begin{equation}\label{eq:KKT}
\beta_{j}=T(\bbeta_{-j},\beta_{j})=\left\{\begin{array}{ll}
0 & \text{ if } (\bX_{j}^{T}\bX_{j})^{-1}|z_{j}|<h(\tbeta_{j})\\
0 & \text{ if } (\bX_{j}^{T}\bX_{j})^{-1}|z_{j}| \geq h(\tbeta_{j}) \quad \text{and} \quad \delta_{-j}(\beta_{j}^{''})>0\\
\operatorname{sign}(z_{j}) \beta_{j}^{''} & \text{ if } (\bX_{j}^{T}\bX_{j})^{-1}|z_{j}| \geq h(\tbeta_{j}) \quad \text{and} \quad \delta_{-j}(\beta_{j}^{''}) \leq 0,
\end{array}\right.
\end{equation}
where  $\delta_{-j}(\beta_{j})=L_{-j}(\beta_{j})-L_{-j}(0)$, $\beta_{j}^{''} \in [\tbeta_{j},(\bX_{j}^{T}\bX_{j})^{-1}|z_{j}|]$, $C_1$ and $C_2$ are as given in \eqref{eq:const} and $L_{-j}$ denotes the loss function in Equation (\ref{eq:objectiveS}) with all except the $j^{th}$ $\bbeta$ fixed.
\end{theorem}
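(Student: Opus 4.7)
The plan is to analyze the one-variable problem $\min_{\beta_j} L_{-j}(\beta_j)$ directly, because the non-separable penalty is continuous at $\beta_j=0$ but has left/right derivatives that blow up there, so the soft-thresholding identity \eqref{eq:rokova} is only necessary away from zero and does not by itself resolve the 0 vs.\ nonzero question. My first step is to differentiate $L_{-j}$ on $\{\beta_j\neq 0\}$, where it is smooth, and plug in the explicit expression for $\partial\mathbf{pen}_b/\partial|\beta_j|$ already derived in the excerpt. A sign inspection shows that any nonzero stationary point must satisfy $\operatorname{sign}(\beta_j)=\operatorname{sign}(z_j)$; writing $\beta=|\beta_j|$ and $c=(\bX_j^T\bX_j)^{-1}|z_j|$, stationarity collapses to the scalar equation $h(\beta)=c$, with $h$ exactly as in the second line of \eqref{eq:betasol}.

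Next I would study $h$ on $(0,\infty)$. Let $g(\beta)=\beta+C_2\beta^{1-1/2^\gamma}$: since $\gamma\ge 1$ we have $1-1/2^\gamma\in[1/2,1)$, so $g>0$ is strictly increasing and concave, which makes $1/g$ convex and hence $h(\beta)=\beta+(\bX_j^T\bX_j)^{-1}C_1/g(\beta)$ convex. Because $h(\beta)\to+\infty$ both as $\beta\to 0^+$ (the rational term blows up) and as $\beta\to\infty$ (the linear term dominates), $h$ attains a unique interior minimum at some $\tbeta_j>0$, and writing $h'(\tbeta_j)=0$ reproduces the first line of \eqref{eq:betasol}. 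Setting $h_*:=h(\tbeta_j)$, convexity implies that the equation $h(\beta)=c$ has no solution for $c<h_*$, exactly one solution $\beta=\tbeta_j$ for $c=h_*$, and exactly two solutions $\beta_j^*\le\tbeta_j\le\beta_j^{''}$ for $c>h_*$; the upper bound $\beta_j^{''}\le c$ is automatic since $h(\beta)\ge\beta$, giving $\beta_j^{''}\in[\tbeta_j,c]$ as claimed.

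The main obstacle, and the last step, is classifying which critical points are local minima of $L_{-j}$ and comparing the winner with $L_{-j}(0)$. Because $L_{-j}'(\beta_j)\to+\infty$ as $\beta_j\to 0^+$ (the penalty derivative dominates) and as $\beta_j\to+\infty$ (the quadratic term dominates), with only the two candidate roots in between, the larger root $\beta_j^{''}$ is necessarily a local minimum while the smaller root $\beta_j^*$ is a local maximum of $L_{-j}$; a symmetric analysis handles $\beta_j<0$. When $c<h_*$, $L_{-j}'$ is strictly positive on $(0,\infty)$ and strictly negative on $(-\infty,0)$, so $L_{-j}$ is minimized at $\beta_j=0$, which is the first branch of \eqref{eq:KKT}. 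When $c\ge h_*$ the only candidates for the global minimum on $\mathbb{R}$ are $\beta_j=0$ and $\beta_j=\operatorname{sign}(z_j)\beta_j^{''}$, and their comparison is governed by $\delta_{-j}(\beta_j^{''})=L_{-j}(\beta_j^{''})-L_{-j}(0)$: the nonzero candidate wins strictly if $\delta_{-j}<0$, the zero candidate wins strictly if $\delta_{-j}>0$, and both are global minima in the boundary case $\delta_{-j}=0$, where the indicator $I(\beta_j\ne 0)$ in \eqref{eq:KKT} encodes the coordinate-descent convention of not disturbing a current zero iterate.
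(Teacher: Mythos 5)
Your proposal is correct and follows essentially the same route as the paper's proof: reduce to the scalar problem, observe that nonzero stationary points solve $\psi(\beta)=(\bX_{j}^{T}\bX_{j})^{-1}|z_{j}|$ with $\psi$ strictly convex and minimized at $\tbeta_{j}$, identify the larger root $\beta_{j}''\in[\tbeta_{j},(\bX_{j}^{T}\bX_{j})^{-1}|z_{j}|]$ as the unique local minimum on $(0,\infty)$, and decide between it and $0$ via the descent function $\delta_{-j}$. The only (cosmetic) difference is that you obtain convexity of $\psi$ from the concavity of $\beta\mapsto\beta+C_{2}\beta^{1-1/2^{\gamma}}$ rather than by computing $\psi''$ directly, and you omit the fixed-point/contraction argument, which belongs to Corollary \ref{corr:fixed_point} rather than to the theorem itself.
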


The proof of the theorem  
is provided in Section 3 of the Appendix, where in steps one and two of the proof, we also get the following corollary: \begin{corollary}\label{corr:fixed_point}
$(\bX_{j}^{T}\bX_{j})^{-1}|z_{j}| \geq h(\tbeta_{j})$ if and only if 
$\beta_{j}^{''}$ exists and can be computed by fixed point iteration:
$\beta_{j}^{(k+1)}=\rho(\beta_{j}^{(k)})$, where
\begin{equation}\label{eq:fixpt}
\rho(\beta_{j})=(\bX_{j}^{T}\bX_{j})^{-1}|z_{j}|-(\bX_{j}^{T}\bX_{j})^{-1}\frac{C_{1}}{\beta_{j}+C_{2}\beta_{j}^{1-\frac{1}{2^{\gamma}}}}
\end{equation}
with the initial condition $\beta_{j}^{(0)}\in [\tbeta_{j},(\bX_{j}^{T}\bX_{j})^{-1}|z_{j}|]$. When $(\bX_{j}^{T}\bX_{j})^{-1}|z_{j}|=h(\tbeta_{j})$, $\beta_{j}^{''}=\tbeta_{j}$.
\end{corollary}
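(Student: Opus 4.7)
The plan is to reduce the question of whether $\beta_{j}^{''}$ exists to a scalar equation in one variable, and then to analyze the iteration map $\rho$ as a monotone (rather than contraction) map on the interval in question. Assume $z_{j}>0$ by symmetry and set $\beta^{*}:=(\bX_{j}^{T}\bX_{j})^{-1}|z_{j}|$. The first-order condition for a positive stationary point of $L_{-j}$, derived in the preceding steps of the proof of Theorem~\ref{thm:betasol}, is precisely $\rho(\beta_{j})=\beta_{j}$, which I would rewrite as
$$g(\beta_{j}):=\beta_{j}+\frac{(\bX_{j}^{T}\bX_{j})^{-1}C_{1}}{\beta_{j}+C_{2}\beta_{j}^{1-1/2^{\gamma}}}=\beta^{*}.$$
Since $g'(\beta_{j})=1-\rho'(\beta_{j})$ and the formula
$$\rho'(\beta_{j})=(\bX_{j}^{T}\bX_{j})^{-1}C_{1}\,\frac{1+(1-1/2^{\gamma})C_{2}\beta_{j}^{-1/2^{\gamma}}}{\beta_{j}^{2}(1+C_{2}\beta_{j}^{-1/2^{\gamma}})^{2}}$$
identifies $\tbeta_{j}$, via the first line of \eqref{eq:betasol}, as a root of $\rho'=1$, one has $g(\tbeta_{j})=h(\tbeta_{j})$ and $g'(\tbeta_{j})=0$.

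Next I would verify that $\rho'$ is strictly decreasing on $(0,\infty)$, which together with $\rho'(0^{+})=+\infty$ and $\rho'(+\infty)=0$ makes $g$ strictly unimodal: strictly decreasing on $(0,\tbeta_{j}]$, strictly increasing on $[\tbeta_{j},\infty)$, with minimum value $h(\tbeta_{j})$ and $g(0^{+})=g(+\infty)=+\infty$. Consequently $g(\beta_{j})=\beta^{*}$ has a solution if and only if $\beta^{*}\ge h(\tbeta_{j})$; equality gives the unique root $\tbeta_{j}$, while strict inequality produces two roots, one in $(0,\tbeta_{j})$ and one in $(\tbeta_{j},\beta^{*})$. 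The latter is the $\beta_{j}^{''}\in[\tbeta_{j},\beta^{*}]$ singled out in Theorem~\ref{thm:betasol}, which settles both the ``iff'' and the equality-case identification.

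For the iteration I would argue by monotone convergence rather than contraction, since $\rho'(\tbeta_{j})=1$ makes a Banach argument borderline at the left endpoint. Because $\rho'>0$, $\rho$ is strictly increasing on $(0,\infty)$. The self-mapping property $\rho([\tbeta_{j},\beta^{*}])\subseteq[\tbeta_{j},\beta^{*}]$ follows from $\rho(\beta^{*})<\beta^{*}$ (the subtracted term is positive) and $\rho(\tbeta_{j})=\beta^{*}-(h(\tbeta_{j})-\tbeta_{j})\ge\tbeta_{j}$, the latter by the standing hypothesis. Using the identity $\rho(\beta_{j})-\beta_{j}=\beta^{*}-g(\beta_{j})$ together with the unimodality of $g$, one obtains $\rho(\beta_{j})\ge\beta_{j}$ on $[\tbeta_{j},\beta_{j}^{''}]$ and $\rho(\beta_{j})\le\beta_{j}$ on $[\beta_{j}^{''},\beta^{*}]$; combined with the monotonicity of $\rho$, the iterates from any $\beta_{j}^{(0)}\in[\tbeta_{j},\beta^{*}]$ form a bounded monotone sequence (increasing if $\beta_{j}^{(0)}\le\beta_{j}^{''}$, decreasing otherwise) and hence converge to a fixed point of $\rho$ in $[\tbeta_{j},\beta^{*}]$, which by the previous paragraph can only be $\beta_{j}^{''}$. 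The hard part will be the analytical check that $\rho'$ is strictly monotone on $(0,\infty)$, which underwrites the unimodality of $g$; once that is in hand, everything else is a soft order-preservation argument.
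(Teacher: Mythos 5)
Your proposal is correct, and for the convergence of the iteration it takes a genuinely different route from the paper. The existence/``iff'' part is essentially the paper's Step 1: your $g$ is the paper's $\psi$, and your key analytic fact ($\rho'$ strictly decreasing, i.e.\ $g''=-\rho''=\psi''>0$) is exactly the computation the paper carries out to show $\psi$ is strictly convex with minimum $h(\tbeta_{j})$ at $\tbeta_{j}$; your additional observations ($g(0^{+})=+\infty$, a second root in $(0,\tbeta_{j})$ under strict inequality) are a slightly more complete global picture that the paper only needs later, in Step 3, when ruling out that root as a local minimum. Where you genuinely diverge is Step 2: the paper proves convergence of $\beta_{j}^{(k+1)}=\rho(\beta_{j}^{(k)})$ by arguing that $\rho$ is a contraction on $[\tbeta_{j},(\bX_{j}^{T}\bX_{j})^{-1}|z_{j}|]$ and invoking Banach's theorem, whereas you use order-preservation: $\rho$ increasing and self-mapping, with $\rho(\beta_{j})\ge\beta_{j}$ to the left of $\beta_{j}^{''}$ and $\rho(\beta_{j})\le\beta_{j}$ to the right, so the iterates are monotone and bounded and converge to the unique fixed point. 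Your instinct about the borderline endpoint is well taken: since $\rho'(\tbeta_{j})=1$, the map is not a strict contraction on the closed interval (the paper's Banach step really needs the weaker ``contractive'' property $|\rho(x)-\rho(y)|<|x-y|$ plus compactness, i.e.\ Edelstein's theorem, or a remark that the iterates immediately leave the left endpoint). Your monotone argument sidesteps that technicality cleanly, at the cost of not giving the geometric convergence rate that a contraction estimate would provide away from $\tbeta_{j}$. Both arguments deliver existence, uniqueness of the fixed point in the interval, and convergence from any admissible initialization, including the degenerate case $(\bX_{j}^{T}\bX_{j})^{-1}|z_{j}|=h(\tbeta_{j})$ where the fixed point collapses to $\tbeta_{j}$.
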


\noindent {\bf Remark 1:} From Theorem \ref{thm:betasol},  we see that the global mode is a blend of hard thresholding and non-linear shrinkage.
$h(\tbeta_{j})$ is the selection threshold where $\tbeta_{j}$ needs to be solved numerically according to Equation \eqref{eq:betasol}. Rather than evaluating the selection threshold by solving this difficult non-linear equation directly, we suggest selecting the variable by checking the convergence of the fixed point iterations. By corollary \ref{corr:fixed_point}, we can run the fixed point iteration with an initial guess $\beta_{j}^{(0)}=(\bX_{j}^{T}\bX_{j})^{-1}|z_{j}|$.  If it fails to converge after a long iteration or $\rho(\beta_{j}^{(k)})<0$, we conclude that $\beta_{j}=0$.

However, this is still computationally intensive. A single block of update requires running the non-parallel fixed point iteration algorithms $p$ times sequentially. The next lemma allows us to run the fixed point iteration only on a small subset of variables. It is also a useful auxiliary result for us to show the oracle property of the estimator later.

\begin{lemma}\label{lem:auxiliary}
If the regressors have been
centered and standardized with $\|\bX_{j}\|^{2}=n$, for $1\leq j \leq p$, then by Equation \eqref{eq:betasol}, for $\gamma \in \{1,2,3,\ldots\}$, we have
$2\tbeta_{j}\leq h(\tbeta_{j}) \leq 3\tbeta_{j}$ and 
$$\tbeta_{j}^{(2-\frac{1}{2^{\gamma}})}=\frac{C_{1}M'(\gamma)}{\bX_{j}^{T}\bX_{j}}\frac{1}{\tbeta_{j}^{\frac{1}{2^{\gamma}}}+C_{2}}=M'(\gamma)\frac{p+\frac{a}{2^{\gamma}}}{n\tbeta_{j}^{\frac{1}{2^{\gamma}}}+nC_{2}}$$
for $M'(\gamma)\in (\frac{1}{2},1]$.
If the estimator satisfies $\sum_{i=1}^{p_{n}} |\hat{\beta}_{j}|^{\frac{1}{2^{\gamma}}}=O(s_{0})$ and $\|\hat{\beta}\|_{\infty}< \infty$, then $\tbeta_{j}^{(2-\frac{1}{2^{\gamma}})}=O(\frac{p_{n}}{s_{0}n+nb^{-1}})$. In addition, there exists another lower bound for the selection threshold $u(\beta_{-j}) < h(\tbeta_{j})$, which is not a function of $\tbeta_{j}$.
\begin{equation}\label{eq:lowbound}
u(\beta_{-j})=2\left\{\frac{C_{1}(\bX_{j}^{T}\bX_{j})^{-1}}{2C_{2}+2[(\bX_{j}^{T}\bX_{j})^{-1}|z_{j}|]^{\frac{1}{2^{\gamma}}}}\right\}^{\frac{1}{2-\frac{1}{2^{\gamma}}}}.
\end{equation}

\end{lemma}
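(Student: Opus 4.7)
The approach will be entirely algebraic, centred on the single change of variables $u := C_2 \tbeta_j^{-1/2^\gamma}$, which collapses all $\tbeta_j$-dependence in \eqref{eq:betasol} into expressions in $u$. With this substitution the first line of \eqref{eq:betasol} rewrites as $\tbeta_j^{2}(1+u)^{2} = (\bX_j^T\bX_j)^{-1} C_1 [1 + (1 - 2^{-\gamma}) u]$. Defining $M'(\gamma) := [1 + (1 - 2^{-\gamma})u]/(1+u) = 1 - 2^{-\gamma} \cdot u/(1+u)$ and noting $u/(1+u) \in [0, 1)$ together with $\gamma \geq 1$, I immediately obtain $M'(\gamma) \in (1 - 2^{-\gamma}, 1] \subseteq (1/2, 1]$. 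Recognising the factorisation $\tbeta_j^{2}(1+u) = \tbeta_j^{2-1/2^\gamma}(\tbeta_j^{1/2^\gamma} + C_2)$ then yields the identity $\tbeta_j^{2-1/2^\gamma}(\tbeta_j^{1/2^\gamma} + C_2) = C_1 M'(\gamma)/\bX_j^T\bX_j$ exactly as claimed. Substituting the same identity back into $h(\tbeta_j) = \tbeta_j + (\bX_j^T\bX_j)^{-1} C_1 /[\tbeta_j(1+u)]$, the second term collapses to $\tbeta_j/M'(\gamma)$, giving the clean closed form $h(\tbeta_j) = \tbeta_j(1 + 1/M'(\gamma))$. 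Since $1/M'(\gamma) \in [1, 2)$, the sandwich $2\tbeta_j \leq h(\tbeta_j) \leq 3\tbeta_j$ follows at once.

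For the asymptotic rate, $M'(\gamma) \leq 1$ and $\tbeta_j^{1/2^\gamma} \geq 0$ applied to the product identity give $\tbeta_j^{2-1/2^\gamma} \leq C_1/(n C_2) = (p_n + 2^{-\gamma-1})/(n C_2)$. The hypothesis $\sum_i |\hat\beta_i|^{1/2^\gamma} = O(s_0)$ together with $\|\hat\beta\|_\infty < \infty$ (so that removing the $j$-th summand changes the sum only by $O(1)$) controls $C_2 = \sum_{i\ne j}|\hat\beta_i|^{1/2^\gamma} + 1/b$ on the order of $s_0 + 1/b$, yielding the announced rate $O\bigl(p_n/(n s_0 + n/b)\bigr)$.

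The final piece, $u(\beta_{-j}) < h(\tbeta_j)$, is the most delicate. My plan is to raise both sides to the $(2 - 2^{-\gamma})$-th power and use the identities $\tbeta_j^{2-1/2^\gamma} = C_1 M'(\gamma)/[n(C_2 + \tbeta_j^{1/2^\gamma})]$ together with $h(\tbeta_j) = \tbeta_j(1 + 1/M'(\gamma))$ to reduce the claim to
$$C_2 + \tbeta_j^{1/2^\gamma} < 2 M'(\gamma) \Big(\tfrac{1 + 1/M'(\gamma)}{2}\Big)^{2 - 2^{-\gamma}} \Bigl(C_2 + [(\bX_j^T\bX_j)^{-1}|z_j|]^{1/2^\gamma}\Bigr).$$
Because $2 M'(\gamma) > 1$ and the bracketed factor is $\geq 1$, the inequality is immediate whenever $(\bX_j^T\bX_j)^{-1}|z_j| \geq \tbeta_j$, which is exactly the screening regime where the lower bound is useful. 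For smaller $|z_j|$ I will invoke the asymptotic bound from the preceding paragraph, which forces $\tbeta_j^{1/2^\gamma}$ to be negligible compared to $C_2$, so that the strictly positive slack $2M'(\gamma) - 1 > 0$ multiplying the $C_2$ terms closes the gap. I expect this last regime to be the main obstacle, since the strict inequality is tight for small $C_2$ and requires carefully propagating the growth rate of $C_2$ relative to $\tbeta_j^{1/2^\gamma}$ rather than just invoking one-line bounds.
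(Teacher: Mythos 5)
Your substitution $u=C_{2}\tbeta_{j}^{-1/2^{\gamma}}$ and the identities it produces are correct and reproduce the paper's computation: the paper's $M'(\gamma)=1-\frac{2^{-\gamma}C_{2}}{\tbeta_{j}^{1/2^{\gamma}}+C_{2}}$ is exactly your $1-2^{-\gamma}u/(1+u)$, and its expression $h(\tbeta_{j})=2\tbeta_{j}+\frac{2^{-\gamma}C_{2}}{\tbeta_{j}^{1/2^{\gamma}}+(1-2^{-\gamma})C_{2}}\tbeta_{j}$ is your $\tbeta_{j}\bigl(1+1/M'(\gamma)\bigr)$ in disguise; your closed form is the cleaner of the two and makes the sandwich $2\tbeta_{j}\le h(\tbeta_{j})\le 3\tbeta_{j}$ immediate. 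The product identity and the rate bound are likewise the paper's argument.

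The gap is in your last step. The paper does not prove $u(\bbeta_{-j})<h(\tbeta_{j})$ unconditionally: it uses $M'(\gamma)>\tfrac{1}{2}$ together with $\tbeta_{j}<(\bX_{j}^{T}\bX_{j})^{-1}|z_{j}|$ in the identity to get $\tbeta_{j}^{2-2^{-\gamma}}\ge \frac{C_{1}(\bX_{j}^{T}\bX_{j})^{-1}}{2C_{2}+2[(\bX_{j}^{T}\bX_{j})^{-1}|z_{j}|]^{1/2^{\gamma}}}$, i.e.\ $\tbeta_{j}\ge u(\bbeta_{-j})/2$, and then $h(\tbeta_{j})\ge 2\tbeta_{j}\ge u(\bbeta_{-j})$ --- that is precisely your ``immediate'' branch, and it is the entire proof. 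The complementary regime you plan to handle, $(\bX_{j}^{T}\bX_{j})^{-1}|z_{j}|<\tbeta_{j}$, cannot be closed the way you propose: the asymptotic bound on $\tbeta_{j}$ requires hypotheses ($\sum_{i}|\hat{\beta}_{i}|^{1/2^{\gamma}}=O(s_{0})$ and a prescribed $b$) that are not in force for this pointwise algebraic claim, and ``$\tbeta_{j}^{1/2^{\gamma}}$ negligible compared to $C_{2}$'' is not guaranteed --- $C_{2}$ can be made arbitrarily small by taking $b$ large and $\bbeta_{-j}=0$, while the identity then pins $\tbeta_{j}$ away from zero (for $\gamma=1$ and $C_{2}\to 0$ it gives $\tbeta_{j}\to (C_{1}/n)^{1/2}$, so $h(\tbeta_{j})\to 2(C_{1}/n)^{1/2}$ stays bounded whereas $u(\bbeta_{-j})\to\infty$ as $|z_{j}|\to 0$). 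So the strict inequality is genuinely false outside the regime $\tbeta_{j}\le(\bX_{j}^{T}\bX_{j})^{-1}|z_{j}|$, and no amount of care with the growth of $C_{2}$ will rescue it. The repair is not more asymptotics but the observation that the complementary regime is vacuous for the purpose the bound serves: if $\tbeta_{j}>(\bX_{j}^{T}\bX_{j})^{-1}|z_{j}|$ then $h(\tbeta_{j})\ge 2\tbeta_{j}>(\bX_{j}^{T}\bX_{j})^{-1}|z_{j}|$ and the coordinate is thresholded to zero anyway, so the screening rule of Remark~2 is unaffected.
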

The proof of Lemma \ref{lem:auxiliary} is in Section 3 of the Appendix.\\
\noindent {\bf Remark 2:} Lemma \ref{lem:auxiliary} provides an explicit lower bound for the selection threshold $h(\tbeta_{j})$, we see that $\hat{\beta}_{j}=0$ if $(\bX_{j}^{T}\bX_{j})^{-1}|z_{j}| \leq u(\beta_{-j})$. Therefore, we only need to run the fixed point iterations in $(\bX_{j}^{T}\bX_{j})^{-1}|z_{j}|> u(\beta_{-j})$ cases. To further speed up, we suggest using the null model $\bbeta_{Initial}=0$ as initialisation. Then in the early stages of the updates, the lower bound (\ref{eq:lowbound}) will be large, which means we only need to run fixed point iteration in a very small set of variables.\\

We denote $\bbeta_{-j}^{i}=(\beta_{1}^{(i+1)},...,\beta_{j-1}^{(i+1)},\beta_{j+1}^{(i)},...,\beta_{p}^{(i)})$ as the vector of regression coefficients with the $j^{th}$ element left out.  Then with some small error tolerance  $\epsilon$, terminal number of fixed point iterations $T$, and user-specified values for the hyperparameters $a, b$, the CD optimisation algorithm for NSB penalty is given in Algorithm \ref{alg:VS}.\\

\begin{algorithm}[H]
\SetAlgoLined
Input $\epsilon>0$; $a>0$; $b>0$ ; $\gamma>0$\\
Initialise $\bbeta^{(0)}=0$; $i=1$\\
\While {$||\bbeta^{(i)}-\bbeta^{(i-1)} || > \epsilon$ }{
 \For{$j \leftarrow 1$ \KwTo $P$}{
 $z_{j}=\bX_{j}^{T}(\bY-\bX_{-j}{\bbeta}^i_{-j})$\\
  $C_{2}=\sum_{k \neq j}|\beta_{j}|^{\frac{1}{2^{\gamma}}}+\frac{1}{b}$\\
 
  \uIf{$(\bX_{j}^{T}\bX_{j})^{-1}|z_{j}| \leq u(\beta_{-j})$}{$\beta_{j}^{(i+1)}=0$}
 \Else{
 $\beta_{j}^{(i+1,0)}=(\bX_{j}^{T}\bX_{j})^{-1}|z_{j}|$
 
 $k=1$
 
 \While{$|\beta_{j}^{(i, k)}-\beta_{j}^{(i, k-1)}|>\epsilon \quad \text{and} \quad \beta_{j}^{(i, k)}>0 \quad \text{and} \quad k\leq T$} {

  $\beta_{j}^{(i, k+1)}=(\bX_{j}^{T}\bX_{j})^{-1}|z_{j}|-(\bX_{j}^{T}\bX_{j})^{-1}\frac{C_{1}}{\beta^{(i, k)}_{j}+C_{2}(\beta^{(i, k)}_{j})^{1-\frac{1}{2^{\gamma}}}}$
  
  $k=k+1$
  }
  \uIf{$\beta_{j}^{(i, k)}<0 \quad \text{or} \quad k>T $}{
   ${\beta}^{(i+1)}_{j}=0$\;
   }
   \Else{${\beta}^{(i+1)}_{j}=\operatorname{sign}(z_{j})\beta_{j}^{(i,k)}$}
 }
 }
 $i=i+1$
 }
 \KwResult{${\bbeta}=\{\beta^{(i)}_1,\ldots, \beta^{(i)}_p\}$}
 \caption{Coordinate descent (CD) optimisation}\label{alg:VS}
\end{algorithm}

Further discussion on how to set $a$ and $b$ will be given in Section \ref{sec:oracle}.
The CD algorithm we propose here is not limited to the linear regression model, by using the proximal Newton map \citep{lee2014proximal} or the EM algorithm. Algorithm 2 can be extended to models with non-Gaussian likelihoods, see Section 4 of the Appendix.

\subsection{Convergence analysis}

In general, the convergence of the loss function alone cannot guarantee the convergence of $\left\{\bbeta^{i}\right\}_{i \geq 1}$. \cite{mazumder2011sparsenet} showed the convergence of CD algorithms for a subclass of non-convex non-smooth penalties for which the first and second-order derivatives are bounded. The authors also observed that for the type of log penalty proposed by \cite{friedman2012fast}, the CD algorithm can produce multiple limiting points without converging. 

The derivatives of the Bridge penalty and NSB are unbounded at zero, suggesting a local solution at zero, this causes discontinuity in the induced threshold operators. 
Recently, \cite{marjanovic2014l_} proved convergence of the CD algorithm for the Bridge penalty.
Here, we show that the CD algorithm with the NSB penalty also converges. We provide convergence analysis in the next two theorems. Details of the proofs can be found in Section 6 of the Appendix. 

\begin{theorem}\label{thm:convergence}
Suppose that given any fixed value of $\beta$ and the observation $(\bY,\bX)$, the log-likelihood term is bounded and the sequence $\left\{\bbeta^{i}\right\}_{i \geq 1}$ is generated by 
$\beta_{j}^{(i+1)}=T(\bbeta_{-j}^{(i)},\beta_{j}^{(i)})$  where the $T(\cdot)$ map is given by Equation (\ref{eq:KKT}) and  $\bbeta_{-j}^{(i)}=(\beta_{1}^{(i+1)},...,\beta_{j-1}^{(i+1)},\beta_{j+1}^{(i)},...,\beta_{p}^{(i)})$, then
$\|\bbeta^{i}-\bbeta^{i+1}\| \rightarrow{0} \mbox{ as } i \rightarrow \infty.$
\end{theorem}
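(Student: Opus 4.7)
The plan is to establish a sufficient descent inequality of the form $L(\bbeta^{(i)}) - L(\bbeta^{(i+1)}) \gtrsim \|\bbeta^{(i)} - \bbeta^{(i+1)}\|^{2}$ (up to at most finitely many exceptional iterations), and then combine it with monotone convergence of $L(\bbeta^{(i)})$ to conclude via telescoping. First I would verify monotonicity: Theorem~\ref{thm:betasol} shows that the coordinate update $\beta_j^{(i+1)} = T(\bbeta_{-j}^{(i)}, \beta_j^{(i)})$ is the \emph{global} minimizer of $L$ viewed as a function of $\beta_j$ only. Hence $L(\bbeta^{(i+1)}) \leq L(\bbeta^{(i)})$. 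Compactness of the data gives $L(\bbeta^{(0)}) < \infty$, and the penalty $\mathbf{pen}_b(\bbeta) = (2^{\gamma}p + 1/2)\log\!\bigl(\sum_j |\beta_j|^{1/2^{\gamma}} + 1/b\bigr)$ is coercive while also being bounded below by $C_1 \log(1/b)$; thus $\{\bbeta^{(i)}\}$ stays in a bounded set and $L(\bbeta^{(i)}) \downarrow L^{*} > -\infty$. In particular, the per-iteration descent tends to zero.

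Let $\bbeta^{(i,j)}$ denote the intermediate state after updating the first $j$ coordinates in outer iteration $i$, and abbreviate $\alpha := \beta_j^{(i,j-1)}$, $\alpha' := \beta_j^{(i,j)}$. I would split the analysis of each coordinate step into two regimes. \textbf{(a) Interior case:} $\alpha$ and $\alpha'$ have the same sign pattern and both are nonzero. The restriction of $L$ to coordinate $j$, call it $h(\beta) = \frac{n}{2}\beta^{2} - z_j \beta + g_j(\beta) + \text{const}$, is smooth there. Since $\alpha'$ is the minimizer, the KKT identity $n\alpha' - z_j + g_j'(\alpha') = 0$ kills the first-order term in a Taylor expansion about $\alpha'$, giving
\[
L(\bbeta^{(i,j-1)}) - L(\bbeta^{(i,j)}) \;=\; \tfrac{1}{2}\bigl(n + g_j''(\xi)\bigr)(\alpha - \alpha')^{2}
\]
for some intermediate $\xi$. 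Direct differentiation of $g_j'(\beta) = C_1/(\beta + C_2 \beta^{1-1/2^{\gamma}})$ shows $|g_j''|$ is decreasing in $\beta$, so the bound $\xi \geq \tbeta_j$ furnished by Lemma~\ref{lem:auxiliary} --- combined with the two-sided control on $\tbeta_j$ one obtains from boundedness of $\{\bbeta^{(i)}\}$ and the relation $n\tbeta_j^{2} + n C_2 \tbeta_j^{2-1/2^{\gamma}} = M'(\gamma)(p + 1/2^{\gamma+1})$ --- yields $n + g_j''(\xi) \geq c_{1} > 0$ uniformly. This is the quadratic descent we need.

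\textbf{(b) Support-switch case:} exactly one of $\alpha, \alpha'$ is zero. Because every nonzero iterate produced by the algorithm must lie in $[\tbeta_j, (\bX_j^{T}\bX_j)^{-1}|z_j|]$ (from Corollary~\ref{corr:fixed_point}), the nonzero side is bounded below by $\tbeta_j \geq c_0 > 0$ uniformly. The descent then has a uniform positive lower bound: in the direction $\alpha' = 0$, Theorem~\ref{thm:betasol} guarantees $\delta_{-j}(\beta_j^{''}) \geq 0$, i.e., the switch to zero occurs only when $h(0)$ beats the interior critical value by a definite margin, and the bounded region of $(\alpha, \bbeta_{-j})$ together with continuity of $h$ prevents this margin from shrinking to zero. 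Since the total descent over one outer iteration tends to zero, support switches can only occur finitely many times in total.

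Combining (a) and (b), for all $i$ large enough every coordinate update falls in the interior case, so summing over $j=1,\ldots,p$ gives $L(\bbeta^{(i)}) - L(\bbeta^{(i+1)}) \geq c\|\bbeta^{(i)} - \bbeta^{(i+1)}\|^{2}$ with $c = c_{1}/2$. Telescoping yields $\sum_{i} \|\bbeta^{(i+1)} - \bbeta^{(i)}\|^{2} \leq c^{-1}(L(\bbeta^{(0)}) - L^{*}) < \infty$, hence $\|\bbeta^{(i+1)} - \bbeta^{(i)}\| \to 0$. The main obstacle is the support-switch case (b): proving that each switch consumes a uniformly positive amount of loss (so switching terminates) requires carefully exploiting the threshold characterization in Theorem~\ref{thm:betasol}, and the key leverage is the uniform lower bound on $\tbeta_j$ coming from Lemma~\ref{lem:auxiliary} together with boundedness of $\{\bbeta^{(i)}\}$.
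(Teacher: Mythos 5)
Your overall strategy (monotone descent from the global-minimizer property of $T$, boundedness via the sub-level set, then a per-coordinate quantitative descent bound plus telescoping) is a genuinely different route from the paper, which instead proves boundedness, shows the per-coordinate descent function vanishes only at fixed points of $T$, and then uses a subsequence/continuity argument (via the implicit function theorem applied to the equations defining $\tbeta_j$ and $\beta_j''$) to force any two subsequential limits of $\bbeta^{n_i}$ and $\bbeta^{n_i+1}$ to coincide. Unfortunately your quantitative route has two concrete gaps. First, in the interior case the uniform bound $n+g_j''(\xi)\geq c_1>0$ is false: the second derivative of the coordinate objective is exactly $\bX_j^T\bX_j\,\psi'(\beta_j)$, which vanishes at $\tbeta_j$ by the defining equation \eqref{eq:betasol}, and the minimizer $\beta_j''$ lands arbitrarily close to $\tbeta_j$ whenever $(\bX_j^T\bX_j)^{-1}|z_j|$ sits just above the threshold $h(\tbeta_j)$. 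So the curvature at the mean-value point $\xi$ can be arbitrarily small even though $\tbeta_j$ itself is bounded below, and the clean inequality $L(\bbeta^{(i)})-L(\bbeta^{(i+1)})\geq c\|\bbeta^{(i)}-\bbeta^{(i+1)}\|^2$ with a uniform $c$ is not available. (A secondary issue: $\xi\geq\tbeta_j$ is not automatic, since the old value $\alpha$ was thresholded against the previous iteration's $\tbeta_j$, computed with a different $C_2$.)

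Second, and more seriously, the claim that each support switch consumes a uniformly positive amount of loss fails near the tie set $\{\delta_{-j}(\beta_j'')=0\}$. On that set $0$ and $\beta_j''$ are both global minimizers of the coordinate objective, so a switch from a nonzero value near $\beta_j''$ to $0$ costs arbitrarily little descent while moving a distance bounded below by $\tbeta_j\geq c_0>0$; continuity of the objective does not rescue you here because the set on which $0$ is the \emph{unique} minimizer is not closed. This is exactly the delicate case the paper handles via the tie-breaking factor $I(\beta_j\neq 0)$ in the definition of $T$ in \eqref{eq:KKT} and a one-sided-approach argument for $\delta_{-j}(\beta_j^{*''})=0$ in its subsequence lemma. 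Without an argument of that type, "switches terminate" does not follow, and with infinitely many switches the conclusion $\|\bbeta^{i}-\bbeta^{i+1}\|\to 0$ is exactly what is at stake. Your proof could be repaired by replacing the uniform quadratic descent with a compactness argument showing that, over the bounded set of reachable states, descent tending to zero forces the coordinate move to tend to zero --- but that is essentially the paper's continuity/subsequence proof, not the telescoping one you propose.
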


\begin{theorem}\label{thm:local_minimizer}
Let $\bbeta^{\infty}$ be the estimator obtained by the CD Algorithm \ref{alg:VS}. Suppose $\bX$ satisfies (A4) with $\|\bbeta^{\infty}\|_{0} \leq \widetilde{p}$, then $\bbeta^{\infty}$ is a strict local minimizer of $L(\cdot)$. In other words, for any ${\bf e}=(e_{1},e_{2},...,e_{p})\in \mathrm{R^{P}}$ and $\|{\bf e}\|=1$, we have

$$\lim _{\alpha \downarrow 0+} \inf _{\alpha}\left\{\frac{L\left(\boldsymbol{\beta}^{\infty}+\alpha \boldsymbol{e}\right)-L\left(\boldsymbol{\beta}^{\infty}\right)}{\alpha}\right\} > 0.$$
\end{theorem}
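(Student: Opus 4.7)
The plan is to decompose any perturbation direction by the support $S=\{j:\beta_j^{\infty}\neq 0\}$ of $\bbeta^{\infty}$, writing $\mathbf{e}=\mathbf{e}_S+\mathbf{e}_{S^c}$, and handle the two sources of behavior separately. With $g(\bbeta)=\sum_j|\beta_j|^{1/2^{\gamma}}$ and $C=2^{\gamma}p+1/2$, the objective reads $L(\bbeta)=\tfrac{1}{2}\|\bY-\bX\bbeta\|^2+C\log(g(\bbeta)+1/b)+\text{const}$, so the only source of non-smoothness is the terms $|\beta_j|^{1/2^{\gamma}}$ at coordinates where $\beta_j=0$. If $\mathbf{e}_{S^c}\neq 0$, then perturbing $0\mapsto\alpha e_j$ for $j\in S^c$ adds $\alpha^{1/2^{\gamma}}|e_j|^{1/2^{\gamma}}$ to $g$, while the smooth $S$-coordinates contribute only $O(\alpha)$; expanding the logarithm gives
\begin{equation*}
L(\bbeta^{\infty}+\alpha\mathbf{e})-L(\bbeta^{\infty})=\frac{C\alpha^{1/2^{\gamma}}\sum_{j\in S^c}|e_j|^{1/2^{\gamma}}}{g(\bbeta^{\infty})+1/b}+O(\alpha),
\end{equation*}
and dividing by $\alpha$ yields a leading term of order $\alpha^{1/2^{\gamma}-1}\to+\infty$ since $1/2^{\gamma}<1$, which settles the claim for all directions activating some currently-zero coordinate.

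For $\mathbf{e}$ supported entirely on $S$, the penalty is $C^{\infty}$ in a neighborhood of $\bbeta^{\infty}$ along $\mathbf{e}_S$, so a second-order Taylor expansion applies:
\begin{equation*}
L(\bbeta^{\infty}+\alpha\mathbf{e}_S)-L(\bbeta^{\infty})=\alpha\,\mathbf{e}_S^T\nabla_S L(\bbeta^{\infty})+\tfrac{\alpha^2}{2}\,\mathbf{e}_S^T\nabla_S^2 L(\bbeta^{\infty})\,\mathbf{e}_S+o(\alpha^2).
\end{equation*}
The gradient term vanishes because Corollary \ref{corr:fixed_point} says the fixed-point identity $\beta_j^{\infty}=\rho(\beta_j^{\infty})$ for $j\in S$ is precisely $\partial L/\partial\beta_j(\bbeta^{\infty})=0$. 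For the Hessian, $\nabla_S^2 L=\bX_S^T\bX_S+H_{\mathrm{pen},S}$; by A4 applied to the index set $S$ (legitimate since $|S|\leq\widetilde{p}$) the first summand is $\succeq nv_1 I$. A direct computation shows $H_{\mathrm{pen},S}$ is a negative rank-one matrix plus a negative diagonal with entries of order $|\beta_j^{\infty}|^{1/2^{\gamma}-2}$. Using the lower bound $|\beta_j^{\infty}|\geq\widetilde{\beta}_j$ inherited from Theorem \ref{thm:betasol} together with the order estimate in Lemma \ref{lem:auxiliary}, these diagonal entries stay bounded in $n$, so $\nabla_S^2 L\succ 0$ for large $n$, the quadratic term is strictly positive, and $\bbeta^{\infty}$ is a strict local minimizer in the $S$-direction as well.

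The main obstacle is the second case: controlling the negative spectrum of $H_{\mathrm{pen},S}$ so that it is strictly dominated by $nv_1$. A near-zero active coefficient $\beta_j^{\infty}$ could in principle inflate the diagonal term $|\beta_j^{\infty}|^{1/2^{\gamma}-2}$ of $H_{\mathrm{pen},S}$ beyond $nv_1$. One must therefore quantitatively invoke $|\beta_j^{\infty}|\geq\widetilde{\beta}_j$ together with $\widetilde{\beta}_j^{2-1/2^{\gamma}}=O(p_n/(s_0 n+nb^{-1}))$ from Lemma \ref{lem:auxiliary} to verify that under the hypothesized growth of $p_n,s_0$ and the choice of $b$ discussed after Corollary \ref{cor:fail_EB}, these diagonals remain $o(n)$ so the restricted-eigenvalue bound from A4 indeed dominates.
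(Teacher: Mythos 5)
Your handling of directions that activate a coordinate outside the support $S=\{j:\beta_j^{\infty}\neq 0\}$ is correct and coincides with the paper's argument: the paper isolates the $\sum_{j\in S^{c}}|\alpha e_{j}|^{\frac{1}{2^{\gamma}}}$ contribution by a mean-value expansion of the logarithm (its $\psi_{1}$ term) and uses the $\alpha^{\frac{1}{2^{\gamma}}-1}\to+\infty$ blow-up to dominate all linear terms, exactly as in your first display.

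For directions supported on $S$ you depart from the paper, and this is where there is a genuine gap. The paper never attempts a second-order argument: it only shows, via the stationarity identity coming from the fixed point $\beta_{j}^{\infty}=\rho(\beta_{j}^{\infty})$, that the first-order piece $\psi_{2}(\alpha)$ of the difference quotient tends to $0$, and stops. You instead need $\nabla_{S}^{2}L(\bbeta^{\infty})\succ 0$, and your justification --- that the diagonal entries of the penalty Hessian ``stay bounded in $n$'' --- is not correct. The dominant (negative) part of the $j$th diagonal has magnitude $C_{1}(1-2^{-\gamma})\,|\beta_{j}^{\infty}|^{2^{-\gamma}-2}/\bigl(|\beta_{j}^{\infty}|^{2^{-\gamma}}+C_{2}\bigr)$; substituting the defining relation from Lemma \ref{lem:auxiliary}, $\tbeta_{j}^{\,2-2^{-\gamma}}=C_{1}M'(\gamma)/\bigl[\bX_{j}^{T}\bX_{j}(\tbeta_{j}^{2^{-\gamma}}+C_{2})\bigr]$, and using only the guaranteed bound $|\beta_{j}^{\infty}|\geq\tbeta_{j}$, this magnitude can be as large as $(1-2^{-\gamma})\,\bX_{j}^{T}\bX_{j}/M'(\gamma)\leq 2(1-2^{-\gamma})n$ --- the \emph{same order} $n$ as the restricted-eigenvalue lower bound $nv_{1}$ from A4, with a constant that nothing forces below $v_{1}$. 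Indeed, at $\beta_{j}=\tbeta_{j}$ the one-dimensional second derivative $\delta_{-j}''=\bX_{j}^{T}\bX_{j}\,\psi'(\tbeta_{j})$ is exactly zero, so an active coefficient sitting near the threshold makes the restricted Hessian nearly singular in that coordinate before the negative rank-one part is even counted. Your proposed repair (that the diagonals are $o(n)$) therefore fails in general: with $b\propto\log p_{n}/p_{n}$ they are $\Theta(n)$ whenever $|\beta_{j}^{\infty}|$ is of the order of $\tbeta_{j}$, and they are $o(n)$ only under an additional beta-min type condition that Theorem \ref{thm:local_minimizer} does not assume. The paper's first-order route avoids this entirely; note also that even if your Hessian claim held, the difference quotient $\alpha^{-1}\bigl[L(\bbeta^{\infty}+\alpha\mathbf{e})-L(\bbeta^{\infty})\bigr]$ is $O(\alpha)$ along support directions and so tends to $0$, which is the same limiting value the paper obtains --- so the second-order machinery buys nothing for the inequality as literally stated while introducing an unproved spectral condition.
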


The boundness condition in theorem  \ref{thm:convergence} is obviously true for the linear regression model considered here. Theorem \ref{thm:local_minimizer} then tells us that with the locally invertible assumption of the covariates over a small subspace, the convergent point is not only a stationary point but also a strict local minimizer. Empirically, we observe that the convergence of the solution paths is pleasingly well-behaved. Figure \ref{fig:3} demonstrates solution paths of the CD algorithm with two different initialisation strategies on a simulated sample of high dimensional regression problem with 990 zero entries and 10 nonzero entries in $\bbeta$. The design matrix $\bX$ is generated with pairwise correlation between $x_{i}$ and $x_{j}$ equal to $ 0.5^{|i-j|}$. Both the response and predictors have been standardized, so $\sigma_{0}^{2}=1$.  Data is simulated with $n=100$ and $p=1000$. From Figure \ref{fig:3}, we see that two solution paths finally converged to the same local minima with only two mistakes (they failed to identify one nonzero element and exclude one zero element).

\begin{figure}[h]
\includegraphics[width=\textwidth]{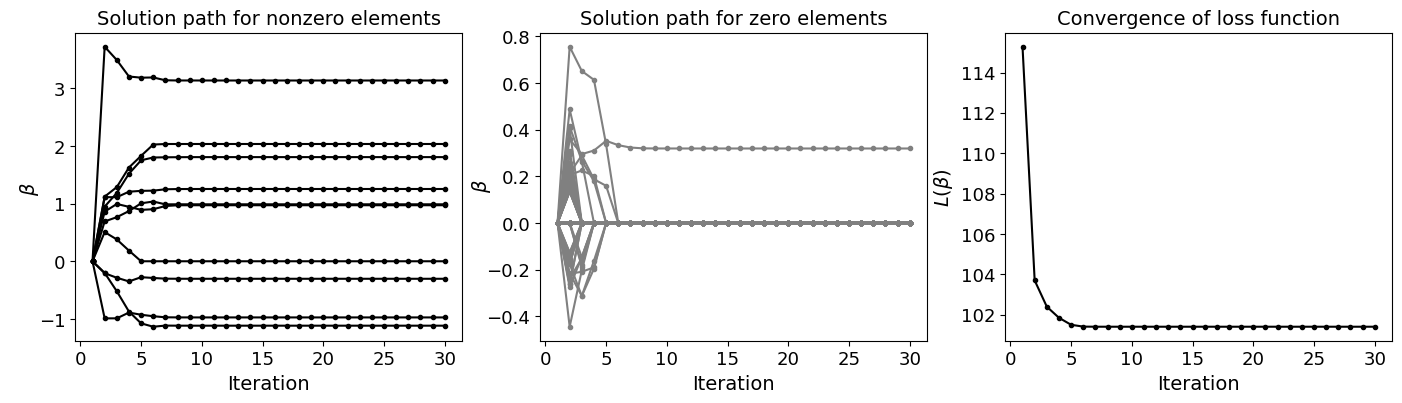}\\
\includegraphics[width=\textwidth]{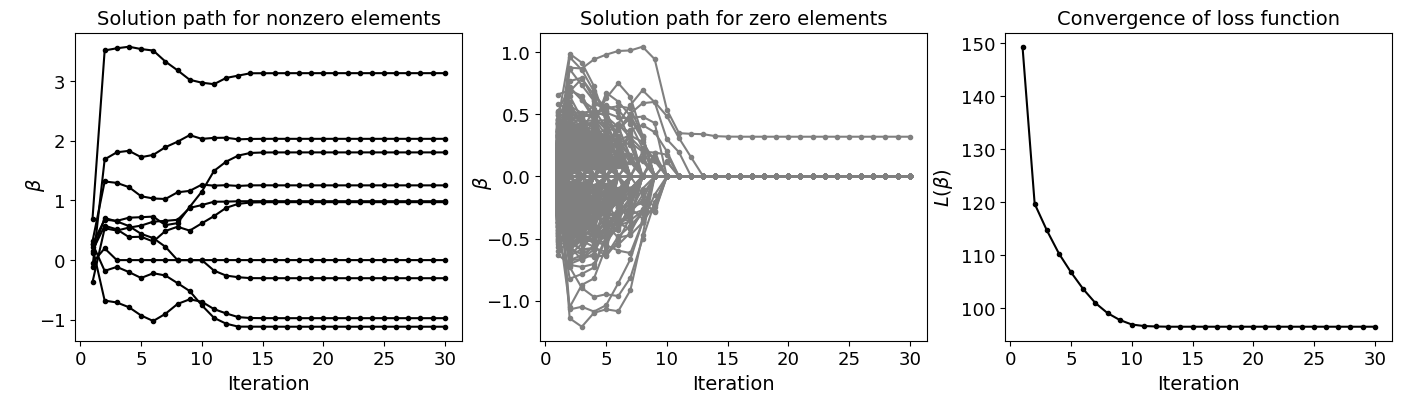}
\caption{Solution paths according to two initialisation strategies, $\bbeta_{initial}=0$ (top row) and random starting values $\bbeta_{initial} \sim N_{p}(0,I_{p})$ (bottom row). The left panel shows the solution paths for the 10 nonzero elements of $\bbeta$, the middle panel shows the solution paths for the (990) zero elements of $\bbeta$ and the right panel shows the path for the loss function $L(\bbeta)$. }\label{fig:3}
\end{figure}

\subsection{Oracle properties}
\label{sec:oracle}

In this section, we study the statistical properties of NSB. We will show that using an arbitrary, small positive number for $a$ and setting $b = C\frac{\log p_{n}}{p_{n}}$ for some positive constant $C$, model consistency 
and asymptotic normality of $\hat{\sigma}^{2}$ can be achieved. We state some extra assumptions here:\\
{\bf A6.} $\sum_{j \in S_{0}}|\beta_{0j}|^{\frac{1}{2^{\gamma}}}=O(s_{0})$.\\
{\bf A7.} There exist a constant $0<c<\infty$ such that $P\left(\left\|\frac{\bX_{S_{0}^{c}}^{T}\bX_{S_{0}}}{n}\right\|_{2,\infty} \leq c \right) \rightarrow 1$ as $n \rightarrow {\infty}$.\\
{\bf A8.} $\min_{j\in S_{0}}|\beta_{0,j}| \geq \max \left\{3\sigma_{0}\sqrt{\frac{2\log s_{0}}{v_{1}\log p_{n}}},3\sigma_{0}\left(\frac{\log p_{n}}{n}\right)^{\frac{1}{2-\alpha}}\right\}$ where $\alpha \in (\frac{1}{2^{\gamma}},2)$.\\
{\bf A9.} $\min_{j\in S_{0}}|\beta_{0,j}| \geq \max \left\{3\sigma_{0}\sqrt{\frac{2\log s_{0}}{v_{1}\log p_{n}}},3\sigma_{0}\left(\frac{ p_{n}}{s_{0}n}\right)^{\frac{1}{2-\alpha}}\right\}$ where $\alpha\in  (\frac{1}{2^{\gamma}},2)$.

Assumption (A6) specifies the strength of the overall true signal. Assumption (A7) states that the covariance between the relevant and irrelevant predictors should not be too strong. In addition, Assumptions (A8) and (A9) explicitly provide the weakest signal rate.  We see that the assumptions for oracle properties of posterior mode are stronger than the assumptions required for posterior consistency.

\begin{theorem}\label{thm:oracle}
Suppose assumptions (A1)-(A4), (A6) to (A7) are satisfied. For any small positive number $a$, under the following two cases, \\
(1): $\lim_{n \rightarrow{\infty}}\frac{p_{n}}{ns_{0}}=0$, $b\geq 1$ or $b \propto \frac{\log p_{n}}{p_{n}}$ and assumption (A8) holds.\\ 
(2): $\lim_{n \rightarrow{\infty}}\frac{p_{n}}{ns_{0}}=\infty$, $b \propto\frac{\log p_{n}}{p_{n}}$ and assumption (A9) holds.\\
we have:\\
(a) {\bf Model consistency} There is a strictly local minimizer $\hat{\bbeta}$  such that
$$\# \left\{j:\hat{\beta_{j}}\neq 0\right\}=S_{0}$$
with probability tending to 1 and, \\\\
(b) {\bf Asymptotic normality} With the local estimator $\hat{\bbeta}$, the correponding variance estimator $\hat{\sigma}^{2}$ from Equation (\ref{eq:sigma2}) has the property that
$$(\hat{\sigma}^{2}-\sigma_{0}^{2})\sqrt{n} \rightarrow N(0,\sigma_{0}^{4}E[\epsilon^{4}]-\sigma_{0}^{4}).$$
\end{theorem}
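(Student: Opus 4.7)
I would prove part~(a) by constructing the oracle least-squares estimator and verifying that, with probability tending to one, it satisfies the KKT characterisation of Theorem~\ref{thm:betasol} coordinate by coordinate; Theorem~\ref{thm:local_minimizer} would then identify it as a strict local minimizer of $L(\cdot)$. Let $\hat{\bbeta}^{\mathrm{or}}$ be defined by $\hat{\bbeta}^{\mathrm{or}}_{S_0}=(\bX_{S_0}^T\bX_{S_0})^{-1}\bX_{S_0}^T\bY$ and $\hat{\bbeta}^{\mathrm{or}}_{S_0^c}=\mathbf{0}$. Under A4 and Gaussian noise, standard least-squares theory gives $\|\hat{\bbeta}^{\mathrm{or}}_{S_0}-\bbeta_{0,S_0}\|_\infty=O_P(\sigma_0\sqrt{\log s_0/(n v_1)})$, which by A8 or A9 is of strictly smaller order than $\min_{j\in S_0}|\beta_{0j}|$, so $\hat{\bbeta}^{\mathrm{or}}$ has the correct sign on $S_0$, inherits $\sum|\hat{\beta}^{\mathrm{or}}_j|^{1/2^\gamma}=O(s_0)$ from A6, and the hypotheses of Lemma~\ref{lem:auxiliary} apply uniformly with $C_2\asymp s_0+1/b$.

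\medskip

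\noindent For the signal coordinates $j\in S_0$, the restricted OLS normal equation gives $(\bX_j^T\bX_j)^{-1}|z_j|=|\hat{\beta}^{\mathrm{or}}_j|$, which under A8/A9 is of much larger order than $3\tbeta_j\geq h(\tbeta_j)$; a standard contraction argument on the map $\rho$ in (\ref{eq:fixpt}) (the shrinkage derivative $C_1/|\beta_j|^2$ is then small relative to $1$) shows that $\beta_j^{\prime\prime}$ exists and lies within $o(|\beta_{0j}|)$ of $\hat{\beta}^{\mathrm{or}}_j$, while the squared-error gain exceeds the logarithmic penalty increase, giving $\delta_{-j}(\beta_j^{\prime\prime})<0$. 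For the null coordinates $j\in S_0^c$, $z_j=\sigma_0\bX_j^T(\mathbf{I}-\mathbf{P}_{S_0})\bepsilon$ where $\mathbf{P}_{S_0}$ projects onto $\mathrm{col}(\bX_{S_0})$, and a Gaussian tail bound with a union bound over the $p_n-s_0$ null indices yields $\max_{j\in S_0^c}(\bX_j^T\bX_j)^{-1}|z_j|=O_P(\sigma_0\sqrt{\log p_n/n})$. Two mechanisms then rule out spurious selection. When $b\propto\log p_n/p_n$, one has $1/b\asymp p_n/\log p_n$, so the lower bound $u(\bbeta_{-j})$ of Lemma~\ref{lem:auxiliary} grows enough to dominate the noise rate and the first branch of (\ref{eq:KKT}) applies directly. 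In the remaining regime $b\geq 1$ under case (1), one instead verifies the second branch by checking $\delta_{-j}(\beta_j^{\prime\prime})=-|z_j|\beta_j^{\prime\prime}+\tfrac{n}{2}(\beta_j^{\prime\prime})^2+(2^\gamma p_n+\tfrac12)\log\{1+(\beta_j^{\prime\prime})^{1/2^\gamma}/C_2\}>0$: the squared-error reduction of order $|z_j|^2/n$ is outweighed by the logarithmic penalty increase driven by the large factor $C_1\asymp p_n$. Combining these checks across cases (1) and (2) gives part (a).

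\medskip

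\noindent For part (b), on the event of (a) we have $\mathrm{supp}(\hat{\bbeta})=S_0$ and the signal-coordinate analysis yields $\hat{\bbeta}_{S_0}=\hat{\bbeta}^{\mathrm{or}}_{S_0}+O_P(r_n)$ with $r_n=o(n^{-1/2})$ under the chosen $b$. Hence $\|\bY-\bX\hat{\bbeta}\|_2^2=\sigma_0^2\bepsilon^T(\mathbf{I}-\mathbf{P}_{S_0})\bepsilon+o_P(\sqrt{n})=\sigma_0^2\sum_{i=1}^n\epsilon_i^2-\sigma_0^2\bepsilon^T\mathbf{P}_{S_0}\bepsilon+o_P(\sqrt{n})$, where $\bepsilon^T\mathbf{P}_{S_0}\bepsilon=O_P(s_0)=o_P(\sqrt{n})$ by A2. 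With $\hat s=s_0$ we also have $n-\hat s=n(1+o(1))$, so $\hat{\sigma}^2=\sigma_0^2 n^{-1}\sum_{i=1}^n\epsilon_i^2+o_P(n^{-1/2})$, and the ordinary CLT applied to the i.i.d.\ sequence $\{\epsilon_i^2\}$, which has mean $1$ and variance $E[\epsilon^4]-1$, delivers $\sqrt{n}(\hat{\sigma}^2-\sigma_0^2)\Rightarrow N(0,\sigma_0^4 E[\epsilon^4]-\sigma_0^4)$. The main obstacle will be the null-coordinate analysis in part (a): because the threshold $h(\tbeta_j)$ is defined implicitly, depends on the whole vector $\bbeta_{-j}$ through $C_2$, and can be of smaller order than $\sigma_0\sqrt{\log p_n/n}$ for small $\gamma$, eliminating every one of the $p_n-s_0$ null coordinates requires carefully trading the threshold lower bound $u(\bbeta_{-j})$ against the $\delta_{-j}>0$ criterion and exploiting the scaling $b\propto\log p_n/p_n$ in case~(2) so that the logarithmic penalty, not the threshold, dominates the stochastic fluctuations.
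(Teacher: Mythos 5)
Your plan follows essentially the same route as the paper's proof: take the oracle estimator $\hat{\bbeta}^{\mathrm{or}}_{S_0}=(\bX_{S_0}^T\bX_{S_0})^{-1}\bX_{S_0}^T\bY$, $\hat{\bbeta}^{\mathrm{or}}_{S_0^c}=\mathbf{0}$, verify the KKT characterisation of Theorem~\ref{thm:betasol} coordinate by coordinate using Gaussian tail bounds on $\|\bX_{S_0}^T\bepsilon/n\|_\infty$ and $\|\bX_{S_0^c}^T\bepsilon/n\|_\infty$ together with the threshold scaling $h(\tbeta_j)\asymp\bigl(C_1/(nC_2)\bigr)^{1/(2-1/2^{\gamma})}$ from Lemma~\ref{lem:auxiliary} and the beta-min conditions A8/A9, then deduce part (b) from the projection identity $\|\bY-\bX\hat{\bbeta}\|^2=\sigma_0^2\bepsilon^T(\mathbf{I}-\mathbf{P}_{S_0})\bepsilon+o_P(\sqrt{n})$ and the CLT for $\epsilon_i^2$. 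The one place you deviate --- handling case (1) with $b\geq1$ through the $\delta_{-j}(\beta_j'')>0$ branch rather than the hard threshold --- is only a cosmetic difference, since expanding $\delta_{-j}$ for small $\beta_j''$ shows that criterion reduces to the same comparison $(\bX_j^T\bX_j)^{-1}|z_j|\lesssim\bigl(C_1/(nC_2)\bigr)^{1/(2-1/2^{\gamma})}$ that the paper uses (the paper proves only case (2) explicitly and asserts case (1) is identical).
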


Proof of the theorem can be found in Section 7 of the Appendix. In practice, we also found that choosing any (positive, small) value for $a$ has no impact on the algorithm's performance. By default, we set $a=\frac{1}{2}$ and $b=C \frac{ \log p}{p}$. For linear regressions, we have used $C=1.5$ as the default setting. For both quantile regression and logistic regression, we set $C=0.5$ as the default setting. The potential improvement can be obtained if we tune it around the default value.

\section{Simulation and real data study}\label{sec:sim}

\subsection{Simulated data analysis }\label{sec:ess}
We first benchmark the PCG sampler for $\Lhalf$ prior with $\gamma=1,2$ against the Gibbs sampler scheme of the horseshoe prior \citep{makalic2015simple} in the high dimension sparse regression setting. 

We simulated two sets of data with moderate correlation between the covariates, ($\rho_{i,j}=0.5^{|i-j|}$) and high ($\rho_{i,j}=0.8^{|i-j|}$) correlation for the design matrix $\bX$ with $p=1000$ variables and sample size $n=100$. The true vector $\bbeta_{0}$ is constructed by assigning 10 nonzero elements $(3,1.5,2,1,1,0.5,-0.5,2,-1.2,-1)$ to locations $(1,2,5,10,13,19,26,31,46,51)$ and setting all the remaining coefficients to zero. We set the variance of the error to 1 ($\sigma_{0}^{2}=1$).

For each dataset, we ran 10 MCMC chains.  Each chain generated 10,000 samples after an initial burn-in of 10,000 samples. We combined the final 10,000 samples of all 10 chains to obtain a total of 100,000 samples. We assigned the Half-Cauchy prior $C^{+}(0,1)$ to the global shrinkage parameter of horseshoe prior as proposed by \cite{carvalho2010horseshoe}. For $\Lhalf$ prior with $\gamma=1$, we assigned $\lambda \sim \mathrm{Gamma}(100,1)$ and for $\gamma=2$, we assigned $\lambda \sim \mathrm{Gamma}(1, 200)$.

We calculated the
effective sample sizes (ESS) based on the formula from \cite{gelman2013bayesian}.  In Table \ref{tab:Table 1}, we report the average ESS over $\beta_{j}$ across $p=1000$ dimensions and the real computational time from a Macbook Air with M2 chip and 8 cores. As expected, the PCG sampler for $\Lhalf$ at $\gamma=1$ is faster than the PCG sampler for $\gamma=2$  due to the extra condition posterior of $v_{2}$ we need to sample with $\gamma=2$ (see Algorithm 1). 
In terms of effective sample size, the PCG sampler for $\Lhalf$ prior for both $\gamma=1$ and $\gamma=2$ significantly out performs the Gibbs sampler for the horseshoe prior, this remains the case when computational time is also accounted for (ESS/T).

\begin{table}[hbt!]
\centering
\begin{tabular}{|c|l|c|c|c|} 
\hline
\multicolumn{2}{|c|}{}       &\multicolumn{2}{c|}{\bf PCG}       &{\bf Gibbs }                \\ 
\multicolumn{2}{|c|}{}         &\multicolumn{1}{c}{($L_{\frac{1}{2}},\gamma=1$)}   &\multicolumn{1}{c|}{($L_{\frac{1}{2}},\gamma=2$)}   &{\bf (horseshoe)}   
  \\ 
\hline
\multicolumn{2}{|c|}{}           & \multicolumn{3}{|c|}{$\rho_{i,j}=0.5^{|i-j|}$}         \\ 
\hline
\multicolumn{2}{|c|}{\bf Time (Sec)}  &{\bf 281 }&372 &368 \\
\hline
\multirow{2}{*}{$\beta_j=0$}  &{\bf ESS}    & {\bf 67809}  &59985  & 37145            \\ 
&{\bf  ESS/T}     & {\bf 241}  &161  & 101   \\
\hline
\multirow{2}{*}{$\beta_j\neq 0$} 
&{\bf ESS} & {\bf 20355}  & 15878  & 7485              \\ 
&{\bf  ESS/T}  &{\bf 72} &43 &20  \\
\hline
\multirow{2}{*}{All $\beta_j$}
&{\bf ESS}   &{\bf 67334} & 59544  & 35848     \\
&{\bf  ESS/T}  &{\bf 240} &161 &97  \\
\hline
  \multicolumn{2}{|c|}{}                    & \multicolumn{3}{c|}{$\rho_{i,j}=0.8^{|i-j|}$}      \\ 
\hline
\multicolumn{2}{|c|}{\bf Time (Sec)}  &{\bf 300} &390 &364 \\
\hline
\multirow{2}{*}{$\beta_j=0$} &{\bf ESS}    & {\bf 66323}  & 60204   & 30531        \\ 
&{\bf  ESS/T}  &{\bf 221} &154 &84  \\
\hline
\multirow{2}{*}{$\beta_j\neq 0$}&{\bf ESS}   & 13483         & {\bf 15890}  & 6720         \\
&{\bf  ESS/T}  &{\bf 45} &41 &18  \\
\hline
\multirow{2}{*}{All $\beta_j$} &{\bf ESS}   & {\bf 65794 }        & 59761 & 30293         \\
&{\bf  ESS/T}  &{\bf 219} &153 &83  \\
\hline
\end{tabular}
\caption{
Averaged computing time (in seconds), based on real computation time from a MacBook Air with M2 chip and  8 core, over 100,000 samples. Effective sample size (ESS), and the ratio of ESS over time are shown for $L_{\frac{1}{2}}$ prior with PCG sampler ($\gamma=1,2)$, the horseshoe with Gibbs sampler, and two correlation structures $\rho_{i,j}$.}\label{tab:Table 1}
\end{table}

We now assess the statistical performance of $\Lhalf$ prior ($\gamma=1,3$) and NSB penalty ($\gamma=1,3$) under controlled conditions. We make comparisons with the full Bayesian horseshoe prior \citep{carvalho2010horseshoe}, the spike-and-slab g-prior (SSG) \citep{smith1996nonparametric,yang2016computational}, the non-separable spike-and-slab LASSO (NSSL) \citep{rovckova2018spike} and the MC+ penalty \citep{zhang2010nearly}.

Similarly to the previous section, we construct the AR(1) correlation structure for design matrix $\bX$ with correlation $\rho_{ij}=0.5^{|i-j|}$. We first test the performance of the models in high dimensional settings under the following scenarios when the number of nonzero regression coefficients is $s_{0}=10$.
$$
\begin{aligned}
& (1)\,n=500, p=1000, \sigma_{0}^{2}=1 \quad(2)\,n=500, p=1000, \sigma_{0}^{2}=3\\
& (3)\,n=100,p=1000,\sigma_{0}^{2}=1 \quad  (4)\,n=100, p=1000, \sigma_{0}^{2}=3.
\end{aligned}
$$
Then we test all zero scenarios with $s_{0}=0$.
$$(1)\ n=100,p=1000, \sigma_{0}^{2}=1 \quad(2)\ n=100,p=1000, \sigma_{0}^{2}=3.$$

For each of the above scenarios, we repeated the experiment 100 times. Each time, the true vector $\bbeta_{0}$ is constructed by assigning 10 non zero elements $(3,1.5,2,1,1,0.5,-0.5,2,-1.2,-1)$ to random locations and setting all the remaining $\bbeta$s to zero. 

For $L_{\frac{1}{2}}$ prior and the horseshoe prior, we used the same hyperprior setting as in Table \ref{tab:Table 1}. We considered the posterior means of $\bbeta$ and $\sigma^{2}$ as the point estimator, variable selection was performed by checking whether the marginal posterior credible intervals of each $\beta_j$ contained zero \citep{van2016horseshoe,wei2020contraction,tadesse2021handbook}. We set $95\%$ as the nominal level. Note that the choice of level is ad hoc, and different choices will lead to different variables being selected. Therefore our results in this regard serve only as an approximate guide.

For SSG, we set $\kappa_{0}=\kappa_{1}=1$. We sampled the posterior of the model using the Metropolis-Hastings with locally informed and thresholded proposal \citep{zhou2022dimension}. We also used the Rao-Blackwellized estimator to estimate $\beta_j$ as suggested by \cite{zhou2022dimension}. We sampled $\sigma^{2}$ from $\pi(\sigma^{2} \mid \mathcal{M}, \bY)$ and calculated the corresponding posterior mean. Variables selection was carried out based on posterior inclusion probability being larger than 0.5 \citep{barbieri2004optimal,barbieri2021median}. For all the Bayesian approaches, we ran MCMC for 20,000 iterations discarding the first 10,000 as burn in.

For the optimisation procedures, we used the default hyperparameter settings as discussed in Section \ref{sec:oracle} and the default settings for NSSL as in \cite{rovckova2018spike}. For MC+ penalty  \citep{zhang2010nearly}, we applied the Sparsenet algorithm \citep{mazumder2011sparsenet}, which also performs cross validation to set the values of tuning parameters $(\lambda,\gamma)$. For NSSL, the variance was estimated by the iterative algorithm from \cite{rovckova2018spike} and \cite{moran2018variance}. For NSB and MC+, we used the variance estimator from Equation (\ref{eq:sigma2}). 

We report simulation results by calculating the averages of the following criteria from 100 experiments: $L_{2}$ error (root mean square error), $L_{1}$ error (mean absolute error), FDR ($\%$ false discovery rate), FNDR ($\%$ false non-discovery rate), HD (Hamming distance), $\hat{\sigma}^{2}$ (variance estimator) and empirical posterior coverage probabilities. To compute the  coverage probability, we find the proportion of true $\bbeta$s falling within the estimated 95\% posterior credible interval, averaged separately for the true $\bbeta=0$ and true $\bbeta\neq 0$. Tables \ref{table:2} -  \ref{table:3} contain results for $s_{0}=10$ and
Table \ref{table:4} contains results for the all zero coefficients scenario ($s_{0}=0$). In the latter case, the FNDR is always zero and the FDR is either 0 or $100\%$ for any single experiment. Thus, we excluded these two quantities from Table \ref{table:4}.

Overall, among all the Bayesian procedures, the $\Lhalf$ prior with $\gamma=1$ and horseshoe prior have comparable performances across all criteria. The $\Lhalf$ prior with $\gamma=2$ only  slightly outperforms $\gamma=1$ at $n=500$ cases. For the other cases, it underselected the variables.
Although it is not shown in the table, our experiments have found that when $p>n$ and the sample size $n$ is small, the $L_{\frac{1}{2}}$ prior with $\gamma>2$ has a tendency to shrink everything to zero. Thus we don't suggest using them in this setting. In terms of variable selection, the SSG prior did slightly better than $\Lhalf$ prior with $\gamma=1$ and horseshoe prior in all the scenarios except in $n=100, p=1000,s_{0}=10,\sigma_{0}^{2}=3$ case.

\begin{table}[]
\centering
\scalebox{0.65}{
\begin{tabular}{|l|cccc|cccc|} 
\hline
                          & \multicolumn{4}{c|}{{\bf $L_{\frac{1}{2}}$ regression}}                                                     &  \multicolumn{1}{c|}{\bf horseshoe} & \multicolumn{1}{c|}{\bf SSG}      & \multicolumn{1}{c|}{\bf NSSL}   & \multicolumn{1}{c|}{\bf MC+} \\
                        
\hline
               & {\bf PCG ($\gamma=1$)}           & {\bf PCG ($\gamma=2$)}  & {\bf NSB ($\gamma=1$)}          & {\bf NSB ($\gamma=3$)}        &\multicolumn{4}{c|}{}              \\
                        
\hline
\multicolumn{9}{|c|}{ $n=500$, $p=1000$, $s_{0}=10$, $\sigma_{0}^{2}=1$}       \\ 
\hline
{\bf L2}                 & 0.4(0.3)      & 0.3(0.0)    & 0.2(0.0)                        & {\bf 0.1(0.0) }                      & 0.3(0.0)                     & 0.4(0.1)                                       & {\bf 0.1(0.0)  }              & 0.2(0.0)            \\ 
\hline
{\bf L1}            & 6.0(0.6)           & 5.4(0.5)      & {\bf 0.4(0.1)}                        & {\bf 0.4(0.0)}                      & 2.9(0.9)                     &  1.2(0.2)                                       &{\bf 0.4(0.1)   }         \    &  0.6(0.2)            \\ 
\hline
{\bf FDR $(\%)$}          & {\bf 0.0(0.0)}           & {\bf 0.0(0.0)}         &{\bf 0.0(0.0)}                           & {\bf 0.0(0.0)}                         & 15.3(2.9)                     & {\bf 0.0(0.0)}                                         & {\bf 0.0(0.0) }              & 23.0(4.1)               \\ 
\hline
{\bf FNDR $(\%)$}        &   0.2(0.1)           & 0.1(0.0)                           & {\bf 0.0(0.0)}                           & {\bf 0.0(0.0)}                  & {\bf 0.0(0.0) }                       & {\bf 0.0(0.0)  }                      & {\bf 0.0(0.0)}               &{\bf 0.0(0.0)  }          \\ 
\hline
{\bf HD }                 & 0.9(0.3)     &  0.6(0.1)      & {\bf 0.0(0.0) }                       & {\bf 0.0(0.0)}                     & 1.8(0.5)                       & {\bf 0.0(0.0)}                                          & {\bf 0.0(0.0)}         & 3.1(0.5)            \\ 
\hline
{\bf $\%$Cov }($\beta_j=0$)   &  99.8(0.6)  & {\bf 100(0.0)}    & NA          & NA                           &  99.8(0.4)                     & NA                                                  & NA                        & NA                    \\ 
\hline
{\bf $\%$Cov }($\beta_j \neq0$) & 80.0(3.6)  & {\bf 94.1(0.3)}    & NA                                & NA                          &  89.8(1.2)                     & NA                                                   & NA                        & NA                    \\ 
\hline
$\boldsymbol{\hat{\sigma}^2}$    &  0.9(0.2)      & 1.2(0.1)   & {\bf 1.0(0.1)}                        & {\bf 1.0(0.1) }                                & 0.8(0.0)                     &  0.9(0.1)                              &{\bf 1.0(0.1) }      
         & {\bf 1.0(0.2)}            \\ 
\hline
\multicolumn{9}{|c|}{$n=500$, $p=1000$, $s_{0}=10$, $\sigma_{0}^{2}=3$}                                                                     \\ 
\hline
{\bf L2}                 &  0.6(0.1)     & 0.7(0.3)    & 0.5(0.1)                        &  0.4(0.1)                       & 0.5(0.1)                     & 0.8(0.2)                                       & {\bf 0.3(0.1) }             & {\bf 0.3(0.1)  }          \\ 
\hline
{\bf L1 }             & 10.0(1.0)         & 10.9(1.0)    & 1.2(0.3)                        & { 0.8(0.3)}                      & 4.6(1.3)                     & 3.4(0.8)                                       & {\bf 0.7(0.2)  }             &  1.0(0.3)            \\ 
\hline
{\bf FDR $(\%)$}          &   1.0(0.1)          &{\bf 0.0(0.0)}   &{\bf 0.0(0.0)}                       & { 0.7(0.2) }                              & 15.3(2.3)                     & {\bf 0.0(0.0) }                                       &{\bf 0.0(0.0) }               &  10.8(4.1)            \\ 
\hline
{\bf FNDR $(\%)$}           & {\bf 0.0(0.0)}          & 0.1(0.0)   &  0.1(0.0)                        &{\bf 0.0(0.0) }                       & {\bf  0.0(0.0) }                           & {\bf  0.0(0.0) }                                           & {\bf  0.0(0.0) }           
      & {\bf  0.0(0.0) }           \\ 
\hline
{\bf HD}               &    0.4(0.3)      & 0.8(0.1)   & 0.3(0.5)                       & 0.3(0.2)       & 1.8(0.3)                       & {\bf 0.0 (0.0)  }                                    & 0.1(0.0)              & 2.9(0.5)            \\ 
\hline
{\bf $\%$Cov }($\beta_j=0$) & 99.8(0.2)   & {\bf 100(0.0)}     & NA                                & NA                          & 99.8(0.1)                     & NA                                                 & NA                        & NA                    \\ 
\hline
{\bf $\%$Cov }($\beta_j \neq0$) &87.5 (4.8)  &  {\bf 92.1(3.1)}   & NA                                & NA                              &  91.5(2.2)                     & NA                                               & NA                        & NA                    \\ 
\hline
$\boldsymbol{\hat{\sigma}^2}$     & 2.2(0.6)    & 3.3(0.2)   & 3.1(0.2)                        &{\bf 3.0(0.2)    }                           & 2.4(0.2)                     & 2.7(0.1)                                & {\bf 3.0(0.1)  }          \    &{\bf  3.0(0.1) }           \\
\hline
\end{tabular}
}
\caption{Comparison of performance under the setting $n=500, p=1000$ and the sparse case $s_0=10$, with two separate cases $\sigma_0^2=1,3$, using $\Lhalf$, NSB, 
horseshoe, SSG,
NSSL and MC+. Coverage probabilities are computed at the 95\% level. Results are based on averaging 100 repetitions of simulated data, respective standard derivations are shown in the brackets. We use bold to highlight the best performance.} \label{table:2}
\end{table}

\begin{table}[hbt!]
\scalebox{0.75}{
\centering\scalebox{0.90}{
\begin{tabular}{|l|cccc|cccc|} 
\hline
                          & \multicolumn{4}{c|}{{\bf $L_{\frac{1}{2}}$ regression}}                                                     &  \multicolumn{1}{c|}{\bf horseshoe} & \multicolumn{1}{c|}{\bf SSG}      & \multicolumn{1}{c|}{\bf NSSL}   & \multicolumn{1}{c|}{\bf MC+}\\ 
\hline
                          & {\bf PCG ($\gamma=1$)} 
                          & {\bf PCG ($\gamma=2$)}
                          & {\bf NSB ($\gamma=1$)}          & {\bf NSB ($\gamma=3$)}          & \multicolumn{4}{c|}{}              \\
                        
\hline
\multicolumn{9}{|c|}{ $n=100$, $p=1000$, $s_{0}=10$, $\sigma_{0}^{2}=1$}       \\ 
\hline
{\bf L2}                         & { 0.8(0.2) } 
&{2.67(0.3)}
& { 0.9(0.2)  }                      &  0.7(0.2)
& {0.8(0.2)}                     & 1.1(0.2)  
& 0.7(0.2)  & {\bf 0.6(0.2)}\\ 
\hline
{\bf L1}                         & 9.0(2.6)  
&{25.39(3.4)}
& 2.3(0.6)                        &  1.7(0.4)
& 9.4(2.1)                     & 4.2(0.3) 
&{\bf 1.6(0.4)}  &1.8(0.4)\\ 
\hline
{\bf FDR ($\%$) }                      & 10.0(0.1)
&{\bf 0.0(0.0)}  & 0.5(0.1) & 0.5(0.1) &15.9(1.1) &{\bf 0.0(0.0)} & 0.2(0.0)  &16.6(4.0)\\ 
\hline
{\bf FNDR ($\%$)}                      &  0.1(0.0)  
&{0.8(0.1)}
&  0.1(0.0)                        &  0.1(0.0)               
& {\bf 0.0(0.0) }                          &  0.1(0.0)
& 0.1(0.0)   &{\bf 0.0(0.0})\\ 
\hline
{\bf HD }                        & 2.0(0.1) 
&{8.3(0.4)}
     & 1.7(0.4)                        & 1.6(0.1) 
     & 1.9(0.2)                     & {\bf 1.4(0.2) }                                  &  1.5(0.1)                & 4.7(0.6)   \\ 
\hline
{\bf $\%$Cov}($\beta_j=0$)     &  99.9(0.0) 
&{\bf 100(0.0)}
& NA                                & NA   
&  99.8(0.2)                     & NA                                                & NA                         & NA           \\ 
\hline
{\bf $\%$Cov}($\beta_j \neq 0$) &   80.1(10.2) 
&{53.4(15.6)}
     & NA                                & NA    
     & {\bf 85.2(2.1)}                     & NA                                         &NA                         & NA           \\ 
\hline
$\boldsymbol{\hat{\sigma}^2}$           & 0.3(0.0) 
&{\bf 0.9(0.1)}
            & 1.5(0.2)                        & 1.4(0.1)  
            & 0.5(0.0)                     
            &{ 0.8(0.1)}                         & {\bf 1.1(0.1)}                  &{\bf 1.1(0.1)}   \\ 
\hline
\multicolumn{9}{|c|}{$n=100$, $p=1000$, $s_{0}=10$, $\sigma_{0}^{2}=3$}                                                                                                                                                                          \\ 
\hline
{\bf L2}                         &  1.5(0.5) 
&{2.9(0.3)}
      & { 1.3(0.4)}                        & {\bf 1.3(0.2) }        & 1.6(0.2)                     & 3.4(0.9)              & 1.4(0.2)        &{\bf 1.3(0.2)}   \\ 
\hline
{\bf L1  }                       & 18.9(3.3)  
&{30.1(5.6)}
     & 4.0(1.9)                        &  3.6(1.4) 
     & 13.6(2.4)                     & 13.2(5.6)                                & {\bf 3.5(0.6)  }              & 4.5(0.6)   \\ 
\hline
{\bf FDR ($\%$)}                        & 12.4(0.4)
&{\bf 0.0(0.0)}
       & 20.0(3.1)                        & 8.2(0.2)  
       & 16.2(0.5)                    & {\bf 0.0(0.0)}                                    &  0.8(0.1)                & 36.9(8.2)   \\ 
\hline
{\bf FNDR ($\%$)}                       &  0.4(0.1)   
&{0.9(0.1)}
      & 0.2(0.1)                        & 0.3(0.1)        & {\bf 0.0(0.0)}                           & 0.5(0.1)                          & 0.3(0.1)                &0.2(0.0)   \\ 
\hline
{\bf HD }                        & 4.0(1.0)  
&{8.3(0.4)}
      & 3.2(1.6)                        & 4.5(0.8)  
      &{\bf  1.9(0.3)}                     & 5.0(1.2)                               & 3.3(0.7)                & 11.6(1.7)  \\ 
\hline
{\bf $\%$Cov}($\beta_j=0$)     &  99.9(0.0) 
&{\bf 100(0.0)}
& NA                                & NA       
& 99.8(0.00)                       & NA                                         & NA                        & NA           \\ 
\hline
{\bf $\%$Cov}($\beta_j \neq 0$) & 62.1(4.7)  
&{50(3.9)}
         & NA                                & NA     & {\bf 76.3(3.1) }                    & NA                                      & NA                       & NA           \\ 
\hline
$\boldsymbol{\hat{\sigma}^2}$          & 0.7(0.1)
&{1.7(0.2)}
& 3.5(0.5)                     & { 3.2(0.4)}                & 2.5(0.3)                        & {\bf 3.0(0.2)  }                      & 3.7(0.4)                &  2.9(0.2)   \\
\hline
\end{tabular}
}}
\caption{Comparison of performance under the setting $n=100, p=1000$ and the sparse case $s_0=10$, with two separate cases $\sigma_0^2=1,3$, using $L_{\frac{1}{2}}$,  NSB,  horseshoe, SSG, NSSL and MC+. Coverage probabilities are computed at the 95\% level. Results are based on averaging 100 repetitions of simulated data, respective standard derivations are shown in the brackets. We use bold to highlight the best performances. }\label{table:3}
\end{table}

\begin{table}[hbt!]
\scalebox{0.75}{
\centering\scalebox{0.90}{
\begin{tabular}{|l|cccc|cccc|} 
\hline
                          & \multicolumn{4}{c|}{{\bf $L_{\frac{1}{2}}$ regression}}                                                     &  \multicolumn{1}{c|}{\bf horseshoe} & \multicolumn{1}{c|}{\bf SSG}      & \multicolumn{1}{c|}{\bf NSSL}   & \multicolumn{1}{c|}{\bf MC+}\\ 
\hline
                          & {\bf PCG ($\gamma=1$)} 
                          & {\bf PCG ($\gamma=2$)}
                          & {\bf NSB ($\gamma=1$)}          & {\bf NSB ($\gamma=3$)}          & \multicolumn{4}{c|}{}              \\
                        
\hline
\multicolumn{9}{|c|}{ $n=100$, $p=1000$, $s_{0}=0$, $\sigma_{0}^{2}=1$}       \\ 
\hline
{\bf L2}                         &  0.1(0.0)  
& 0.2(0.1)
& {\bf 0.0(0.0)  }                      &  {\bf 0.0(0.0)}
&  0.1(0.0)                     & 0.3(0.1)  
&0.7(0.1)  &   0.1(0.0)\\ 
\hline
{\bf L1}                         &  0.2(0.1)  
&{5.2(0.3)}
&{\bf 0.0(0.0)  }                      & {\bf 0.0(0.0)}
& 0.9(0.1)                     & 0.8(0.4) 
& 2.8(0.3)  &0.4(0.0)\\ 
\hline
{\bf HD }                        & {\bf 0.0(0.0)} 
&{\bf 0.0(0.0)}
     &{\bf 0.0(0.0)}                        &{\bf 0.0(0.0)} 
     &{\bf 0.0(0.0)}                     &  {\bf 0.0(0.0)}                                  
     & 13.4(1.2)                & 5.8(0.5)   \\ 
\hline
{\bf $\%$Cov}($\beta_j=0$)     & {\bf 100(0.0)} 
&{\bf 100(0.0)}
& NA                                & NA   
& {\bf 100(0.0)}                    & NA                                                & NA                         & NA           \\ 
\hline
$\boldsymbol{\hat{\sigma}^2}$           & 0.8(0.1) 
&{ 0.3(0.1)}
            & {\bf 1.0(0.1) }                       & {\bf 1.0(0.1)  }
            &  0.9(0.1)                     
            & 0.8(0.1)                         & 0.6(0.0)                  & 0.9(0.0)   \\ 
\hline
\multicolumn{9}{|c|}{$n=100$, $p=1000$, $s_{0}=0$, $\sigma_{0}^{2}=3$}                                                                                                                                                                          \\ 
\hline
{\bf L2}                         & {\bf 0.1(0.0)} 
&{0.3(0.1)}
      & { \bf 0.1(0.0)}                        &   0.1(0.0)         & {\bf 0.1(0.0) }                    & 0.2(0.1)              &{ 1.0(0.2)}        & 0.2(0.0)   \\ 
\hline
{\bf L1  }                       &{\bf 0.1(0.0)}  
     & 7.9(1.1)
     &  0.4(0.0)                        &  0.2(0.1) 
     & 1.2(0.1)                     & 1.1(0.3)                                        & 3.7(0.3)              & 0.7(0.1)   \\ 
\hline
{\bf HD }                        &{\bf 0.0(0.0) } 
&{\bf 0.0(0.0)}
      & 0.4(0.2)                        & 0.4(0.2)  
      & {\bf 0.0(0.0)  }                   & {\bf 0.0(0.0)}                                        & 7.9(0.4)                & 7.6(0.4)  \\ 
\hline
{\bf $\%$Cov}($\beta_j=0$)     & {\bf 100(0.0)} 
&{\bf 100(0.0)}
& NA                                & NA       
& {\bf 100(0.0) }                      & NA                                         & NA                        & NA           \\ 
\hline
$\boldsymbol{\hat{\sigma}^2}$          & 2.7(0.2)
&{1.7(0.2)}
& 2.8(0.4)                     & { 2.9(0.4)}                & 2.8(0.1)                        & {\bf 3.0(0.2)  }                      & 1.2(0.2)                &  2.6(0.1)   \\
\hline
\end{tabular}
}}
\caption{Comparison of performance under the setting $n=100, p=1000$ and $s_0=0$, with two separate cases $\sigma_0^2=1,3$, using $L_{\frac{1}{2}}$, NSB, horseshoe, SSG, NSSL and MC+. Coverage probabilities are computed at the 95\% level. Results are based on averaging 100 repetitions of simulated data, respective standard derivations are shown in the brackets. We use bold to highlight the best performance. }\label{table:4}
\end{table}

One remarkable thing is the underestimation of the variance by both $L_{\frac{1}{2}}$ prior and horseshoe priors when $p>n$. This is because there exists spurious sample correlation in the noise of predictors or between predictor and the response \citep{fan2008sure}. As a result, the realized noises are explained by the model with extra irrelevant variables, leading to underestimation of the variance $\sigma^2$ \citep{fan2012variance}. Table \ref{table:2} also shows that when the sample size becomes larger, this downward bias becomes smaller. In Section 8.1 of the appendix, we show that the variance estimation from the Bayesian approach significantly improves in large-scale problems ($n=2000$, $p=20000$, $s_{0}=10$).

The performances of optimisation procedures are generally comparable with those of the full Bayesian methods, except for $L_{2}$ and $L_{1}$ errors where they tend to perform better.  The results for NSB at $\gamma=1$ and $\gamma=3$ show that their performances are competitive and $\gamma=3$ tend to outperform $\gamma=1$.

\subsection{Real data analysis}

In this section, we analyse a real dataset from \cite{khan2016rafp} for the prediction of anti-freeze proteins (AFP). AFPs are a class of proteins produced by certain organisms to prevent ice formation or growth within their bodies in extremely cold conditions. Applications of AFP include food preservation, human cryopreservation and cryosurgery improvement, boosting freeze tolerance, ice and yoghurt formation \citep{breton2000biotechnological}.
Predicting the presence of AFPs in different species or discovering new AFPs is a field of interest in biotechnology. 

\cite{khan2016rafp} segmented protein sequence into two sub-sequences and calculated the occurrence frequency of amino acid compositions (AAC) and dipeptide compositions (DC) in each sequence separately. DC refers to the two connected amino acids in a protein sequence, which explores partial local information.  With 20 standard AACs, there are $20\times 20$ possible DC, therefore the data has $2\times(20+20\times 20)=840$ features in total. The data is partitioned into training and testing sets. The training set has a balanced setting with 300 AFPs and 300 non-AFPs, while the test set is highly imbalanced, with 181 AFPs and 9193 non-AFPs.

We fitted logistic regression and compared the results using $L_{\frac{1}{2}}$ prior ($\gamma=1$ and $\gamma=2$), NSB ($\gamma=1$ and $\gamma=3$) and horseshoe. We excluded SSG, NSSL and MC+ here as these cannot be readily adapted to the logistic regression setting. We set $\lambda \sim \mathrm{Gamma}(25,1)$ and $\lambda \sim \mathrm{Gamma}(1,170)$ respectively for $L_{\frac{1}{2}}$ prior with $\gamma=1$ and $\gamma=2$. We set $b= 0.004$ and $b=0.003$ respectively for NSB with $\gamma=1$ and $\gamma=3$. For the horseshoe prior, we set its hyperprior to Half-Cauchy $C^{+}(0,1)$. We are interested in whether the methods can correctly classify AFP while at the same time keeping the false positive and false negative small. We ran the MCMC for 20,000 iterations discarding the first 10,000 as burn-in and used the posterior predictive distribution for prediction.

We compare the results using accuracy, sensitivity, specificity and Youden’s index \citep{schisterman2005optimal,powers2020evaluation} with the following formula 
$$
\begin{aligned}
& \text{Accuracy}=\frac{TP+TN}{TP+TN+FN+FP}, \quad \text{Sensitivity}=\frac{TP}{TP+FN}, \\
& \text{Specificity}=\frac{TN}{FP+TN},    \quad \text{Youden’s index}=\text{Sensitivity}+\text{Specificity}-1.
\end{aligned}
$$
where TP is the number of true positives, TN is the number of true negatives, FN is the number of false negatives and FP is the number of false positives.
Table \ref{tb:AFP} shows the comparisons in terms of these metrics under the different methods.  We see that $L_{\frac{1}{2}}$ prior with $\gamma=1$ has the best performance in terms of all the criteria in Table \ref{tb:AFP}. 
  
To study the properties of AFP, it is also desirable to select relevant features, especially for AAC. For the continuous shrinkage priors, the significant features were identified by checking whether the $95\%$ posterior credible intervals contain zero. The significant AAC resulting from these methods are shown in Table \ref{tb:ACC_features}.  

Some of the significant AACs in Table \ref{tb:ACC_features} have biological justifications. For example, alanine is rich in type 1 AFPs  \citep{patel2010structures}, cysteine is rich in type 2 and type 5 AFPs \citep{ng1992structure} and glutamine is rich in type 4 AFPs \citep{deng1997amino}. Some AFPs have many hydrophobic amino acids on their surfaces \citep{howard2011neutron} and proteins with high proline are less likely to have hydrophobic amino acids. We see that with $95\%$ posterior credible intervals, the $\Lhalf$ prior at $\gamma=1$ and NSB at $\gamma=1$ selected all these features. While applying $\Lhalf$ at $\gamma=2$  resulted in 4 features being selected and
applying the horseshoe resulted in 2 features being selected, 
however, none of those selected were AACs.

\begin{table}
\centering
\begin{tabular}{|l|ccccc|} 
\hline
\multicolumn{1}{|l|}{} & {\bf $L_\frac{1}{2} (\gamma=1)$} & {\bf $L_\frac{1}{2} (\gamma=2)$}  & {\bf NSB($\gamma=1$)} & {\bf NSB($\gamma=3$)} & {\bf horseshoe} \\ 
\hline
Accuracy               & $\bf{85.5}\%$   & $85.3\%$          & $83.4\%$        & $82.7\%$     & $79.7\%$    \\ 
\hline
Sensitivity            & $\bf{90.6}\%$   & $85.6\%$          & $84.0\%$        & $78.5\%$      & $83.4\%$   \\ 
\hline
Specificity            & $\bf{85.4}\%$   & $85.3\%$         & $83.4\%$        & $82.8\%$      & $79.6\%$    \\ 
\hline
Youden's index         & \bf{0.76}            & 0.71                 & 0.67            & 0.61       & 0.63       \\ 
\hline
\end{tabular}
\caption{Comparison of the predictive performances of $L_{\frac{1}{2}} (\gamma=1,2)$, NSB $(\gamma=1,3)$ and horseshoe. The best performance method is in bold font for each metric.}\label{tb:AFP}
\end{table}

\begin{table}
\centering
\scalebox{0.80}{
\begin{tabular}{|l|cclclc|} 
\hline
Amino Acid Composition & $L_\frac{1}{2}$ & $L_{\frac{1}{4}}$    & {\bf horseshoe} & {\bf NSB($\gamma=1$)}      & \multicolumn{1}{c}{\bf NSB($\gamma=3$)} & {\bf Reference}              \\ 
\hline
Alanine                & $\checkmark$    &                      &           & $\checkmark$         & \multicolumn{1}{c}{$\checkmark$}    & \cite{patel2010structures}                    \\ 
\hline
Cysteine               & $\checkmark$    & \multicolumn{1}{l}{} &           & \multicolumn{1}{l}{} &                                     & \cite{ng1992structure}                     \\ 
\hline
Aspartic Acid          &                 & \multicolumn{1}{l}{} &           & $\checkmark$         &                                     & \multicolumn{1}{l|}{}  \\ 
\hline
Isoleucine             &                 & \multicolumn{1}{l}{} &           & $\checkmark$         & \multicolumn{1}{c}{$\checkmark$}    & \multicolumn{1}{l|}{}  \\ 
\hline
Leucine                &                 &                      &           & \multicolumn{1}{l}{} & \multicolumn{1}{c}{$\checkmark$}    & \multicolumn{1}{l|}{}  \\ 
\hline
Proline                & $\checkmark$    & \multicolumn{1}{l}{} &           & $\checkmark$         &                                     & \cite{howard2011neutron}                     \\ 
\hline
Glutamine              & $\checkmark$    &                      &           & $\checkmark$         &                                     & \cite{deng1997amino}                     \\ 
\hline
Arginine               &                 & \multicolumn{1}{l}{} &           & $\checkmark$         &                                     & \multicolumn{1}{l|}{}  \\ 
\hline
Valine                 & $\checkmark$    &                      &           & $\checkmark$         &                                     & \multicolumn{1}{l|}{}  \\
\hline
\end{tabular}
}
\caption{AAC features selected by $\Lhalf$ ($\gamma=1,2$), NSB ($\gamma=1,3$) and the horseshoe, as indicated by a tick. The last column shows the references that provide biological justifications of the corresponding AAC.}\label{tb:ACC_features}
\end{table}

\section{Conclusion and Discussion}\label{sec:conclusion}
In this article, we proposed a scale mixture of normal representation of the exponential power prior with $\alpha=(\frac{1}{2})^{\gamma}, \gamma=\{1,2,\ldots\}$. Based on this representation, we developed a {PCG} sampling scheme which is more efficient than traditional Gibbs sampling schemes and is scalable to high-dimensional problems. 

In addition,  inspired by the full Bayesian approach under the exponential power prior, we formulated a non-separable Bridge penalty. This is done by integrating over the hyperparameter $\lambda$ in the Bridge penalty with respect to gamma prior. The NSB penalty yields a combination of global-local adaptive shrinkage and threshold. With the Gaussian likelihood, its KKT condition lends itself to a fast optimisation algorithm via coordinate descent.
Since the NSB penalty is a non-convex, non-separable and non-Lipschitz function, we also provided a theoretical analysis to guarantee the convergence of the CD algorithm. Both the PCG sampler and CD optimisation algorithms are not limited to the Gaussian likelihoods, 
examples of extensions of our algorithm to more general settings can be found in Section 4 of the Appendix, where we gave some details for the case of the Bayesian logistic regression and Bayesian quantile regression with the asymmetric Laplace distribution. We also provide some additional numerical simulation results in Section 8.2 of the Appendix for the quantile regression example.

\section*{Supplementary Materials}

\begin{description}
\item[Appendix:] It contains all the proofs and extra numerical studies (appendix.pdf, PDF file).
\item[Source code:] It contains the PyTorch implementation of the algorithms and the protein data used in this article (code.zip, compressed file) and we refer the reader to the readme file therein for details. The codes are also available at \url{https://github.com/kexiongwen/Bayesian_L_half.git}.
\end{description}

\section*{Acknowledgments}
We would like to thank the editor, the associate editor, and the two reviewers for many valuable comments and helpful suggestions that led to an improved version of this article.

\section*{Disclosure Statement}

No potential conflict of interest was reported by the authors.

\bibliographystyle{apalike}
\bibliography{main.bib}

\end{document}